\newcommand{\Omit}[1]{}
\newcommand{\InfMax}{{Influence Maximization}\xspace}
\newcommand{\IntPer}{{Perturbation Interval}\xspace}
\newcommand{\RobInfMax}{{Robust Influence Maximization}\xspace}
\newcommand{\MINTSS}{\textsc{Greedy Mintss}\xspace}
\newcommand{\SATMINTSS}{\textsc{Saturate Greedy}\xspace}
\newcommand{\CONTINEST}{\textsc{ConTinEst}\xspace}
\newcommand{\GREEDY}{\textsc{Single Greedy}\xspace}
\newcommand{\ALLGREEDY}{\textsc{All Greedy}\xspace}
\newcommand{\SETCOVER}{\textsc{Set Cover}\xspace}
\newcommand{\MULTITREE}{\textsc{MultiTree}\xspace}
\newcommand{\ActProb}{\ensuremath{p}\xspace}
\newcommand{\ActProbD}[1]{\ensuremath{\ActProb_{#1}}\xspace}
\newcommand{\DelayP}[1]{\ensuremath{\alpha_{#1}}\xspace}
\newcommand{\Delay}[1]{\ensuremath{\Delta_{#1}}\xspace}
\newcommand{\IMFuncSym}[1][]{\ensuremath{\ifthenelse{\equal{#1}{}}{\sigma}{\sigma_{#1}}}\xspace}
\newcommand{\IMFUNC}[1][]{\ensuremath{\IMFuncSym[#1]}\xspace}
\newcommand{\IMFunc}[1]{\ensuremath{\IMFuncSym(#1)}\xspace}
\newcommand{\IMFuncSymTrue}{\ensuremath{\hat{\sigma}}\xspace}
\newcommand{\IMTRUE}{\ensuremath{\IMFuncSymTrue}\xspace}
\newcommand{\STRUE}{\ensuremath{\hat{S}}\xspace}
\newcommand{\OptS}[1]{\ensuremath{S^*_{#1}}\xspace}
\newcommand{\GreS}[1]{\ensuremath{S^g_{#1}}\xspace}
\newcommand{\IMFuncTrue}[1]{\ensuremath{\IMFuncSymTrue(#1)}\xspace}
\newcommand{\IMFSET}{\ensuremath{\Sigma}\xspace} % set of functions
\newcommand{\WCOBJ}{\ensuremath{\rho}\xspace} % robust objective
\newcommand{\WCObj}[1]{\ensuremath{\WCOBJ(#1)}\xspace} % with argument
\newcommand{\WCOBJG}{\ensuremath{\WCOBJ^g}\xspace} % greedy version
\newcommand{\WCObjG}[1]{\ensuremath{\WCOBJG(#1)}\xspace} % with argument
\newcommand{\I}{\ensuremath{I}\xspace}
\newcommand{\ID}[1]{\ensuremath{\I_{#1}}\xspace}
\newcommand{\UBD}[1]{\ensuremath{r_{#1}}\xspace}
\newcommand{\LBD}[1]{\ensuremath{\ell_{#1}}\xspace}
\newcommand{\SeedS}{\ensuremath{S_0}\xspace}
\DeclareMathOperator{\argmax}{argmax}
\providecommand{\Kth}[1]{\ensuremath{{#1}^{\rm th}}}
\providecommand{\Norm}[2][]{\ensuremath{%
\ifthenelse{\equal{#1}{}}{\|{#2}\|}{\|{#2}\|_{{#1}}}}\xspace}
\providecommand{\Set}[2][]{\ensuremath{%
\ifthenelse{\equal{#1}{}}{\{#2\}}{\{#2 \; | \; #1\}}}\xspace}
\providecommand{\Expect}[2][]{\ensuremath{%
\ifthenelse{\equal{#1}{}}{\mathbb{E}}{\mathbb{E}_{#1}}%
\left[#2\right]}\xspace}
\providecommand{\PROB}{\ensuremath{{\rm Prob}}\xspace}
\providecommand{\Prob}[2][]{\ensuremath{%
\ifthenelse{\equal{#1}{}}{\PROB[#2]}{\PROB_{#1}[#2]}}\xspace}
\newtheorem{theorem}{Theorem}
\newtheorem{lemma}[theorem]{Lemma}
\newtheorem{definition}{Definition}
\begin{document}

\begin{titlepage}
\title{Robust Influence Maximization}

\author{
Xinran He \qquad \qquad David Kempe\\[2ex]
\normalsize
Department of Computer Science\\
University of Southern California\\
Los Angeles, CA, United States\\
e-mail: \{xinranhe,dkempe\}@usc.edu
}

\maketitle
\begin{abstract}
Uncertainty about models and data is ubiquitous in the computational
social sciences,
%Models for processes on social networks are hard or impossible to
%test, and data about social networks must usually be inferred from
%imprecise observations or reports.
%Such uncertainty 
and it creates a need for \emph{robust} social network algorithms,
which can simultaneously provide guarantees across a spectrum of
models and parameter settings.
We begin an investigation into this broad domain by studying robust
algorithms for the \InfMax problem, in which the goal is
to identify a set of $k$ nodes in a social network whose joint
influence on the network is maximized.

We define a \RobInfMax framework wherein an algorithm is presented
with a set of influence functions, typically derived from different
influence models or different parameter settings for the same model. 
The different parameter settings could be derived from observed cascades
on different topics, under different conditions, or at different times.
The algorithm's goal is to identify a set of $k$ nodes who are
simultaneously influential for all influence functions, compared to the
(function-specific) optimum solutions.

We show strong approximation hardness results for this problem unless
the algorithm gets to select at least a logarithmic factor more seeds
than the optimum solution. 
However, when enough extra seeds may be selected, we show that
techniques of Krause et al.~can be used to approximate the optimum
robust influence to within a factor of $1-1/e$.
We evaluate this bicriteria approximation algorithm against natural
heuristics on several real-world data sets.
Our experiments indicate that the worst-case hardness does not
necessarily translate into bad performance on real-world data sets;
all algorithms perform fairly well. 
%We also highlight qualitative differences between algorithms
%deliberately optimizing for robustness and those for which 
%robustness is an afterthought.

\end{abstract}
\end{titlepage}

\section{Introduction}
Computational social science is the study of social and economic
phenomena based on electronic data, algorithmic approaches and
computational models. 
It has emerged as an important application of data mining and learning,
while also invigorating research in the social sciences.
%Beyond the intrinsic interest held by understanding such phenomena,
Computational social science is frequently envisioned as a
foundation for a discipline one could term ``computational social
engineering,'' wherein algorithmic approaches are used to
change or mitigate individuals' behavior.
%in a way that may lead to societally or financially desirable outcomes.

Among the many concrete problems that have been studied in this context,
perhaps the most popular is \InfMax. 
It is based on the observation that behavioral change in individuals is
frequently effected by influence from their social contacts.
Thus, by identifying a small set of ``seed nodes,''
one may influence a large fraction of the social network.
The desired behavior may be of social value, such as refraining
from smoking or drug use, using superior crops, or following hygienic
practices.
Alternatively, the behavior may provide financial value, as in the
case of viral marketing, where a company wants to rely on word-of-mouth
recommendations to increase the sale of its products.

\subsection{Prevalence of Uncertainty and Noise}
Contrary to the ``hard'' sciences, the study of social networks \xhedit{---}
whether using traditional or computational approaches \xhedit{---}
suffers from massive amounts of noise inherent in the data and models. 
The reasons range from the fundamental to the practical:
\begin{itemize}
\item At a fundamental level, it is not even clear what a
``social tie'' is. 
Different individuals or researchers operationalize 
the intuition behind ``friendship'', ``acquaintance'', ``regular''
advice seeking, etc.~in different ways (see, e.g., \cite{Campbell:Lee}).
Based on different definitions, the same real-world individuals and
behavior may give rise to different mathematical models of the same
``social network.''
\item Mathematical models of processes on social networks (such as
opinion adoption or tie formation) are at best approximations of
reality, and frequently mere guesses or mathematically convenient
inventions. Furthermore, the models are rarely validated against
real-world data, in large part due to some of the following concerns.
\item Human behavior is typically influenced by many environmental
variables, many of them hard or impossible to measure. 
Even with the rapid growth of available social data, 
it is unlikely that data sets will become sufficiently rich to 
%be able to understand 
disentangle the dependence of human behavior on the myriad variables
that may shape it.
\item Observational data on social behavior is virtually always incomplete. 
For example, even if API restrictions and privacy were not concerns
(which they definitely are at this time) and a ``complete'' data set
of Twitter \emph{and} Facebook \emph{and} e-mail communication were
collected, it would still lack in-person and phone interactions.
\item Inferring model parameters relies on a choice of model and
hyperparameters, many of which are difficult to make.
Furthermore, while for many models, parameter inference is
computationally efficient, this is not universally the case.
\end{itemize}

%\subsection{Dealing with Uncertainty}
Since none of these issues are likely to be resolved anytime
soon, both the models for social network processes and their inferred
parameters must be treated with caution.
This is true both when one wants to draw scientific insight for its
own sake, and when one wants to use the inferred models to make
computational social engineering decisions.
Indeed, the correctness guarantees for algorithms are predicated on 
the assumption of correctness of the model and the inferred parameters.
When this assumption fails --- which is inevitable --- the utility of
the algorithms' output is compromised.
Thus, to make good on the claims of real-world relevance of
computational social science,
\emph{it is imperative that the research community
focus on robustness as a primary design goal.}

\subsection{Modeling Uncertainty in Influence Maximization}
We take an early step in this bigger agenda, 
studying robustness in the context of the well-known \InfMax problem. 
(Detailed definitions are given in Section~\ref{sec:models}.) 
In \InfMax, the algorithm selects a set \SeedS of \emph{seed nodes},
of pre-specified size $k$.
The seed nodes are initially exposed to a product or idea; 
we say that they are \emph{active}. 
Based on a probabilistic model of influence propagation\footnote{We
  use the terms ``influence propagation'' and ``diffusion''
  interchangeably.}, 
they cause some of their neighbors to become active, who then cause
some of their neighbors to become active, etc.;
this process leads to a (random) final set of active nodes.
The goal is to maximize the size
%\footnote{Most results in the area
%  readily extend to assigning different nodes different non-negative
%  weights.} 
of this set; we denote this quantity by \IMFunc{\SeedS}.

The concerns discussed above combine to lead to significant
uncertainty about the function \IMFUNC:
different models give rise to very different functional forms of \IMFUNC, 
and missing observations or approximations in inference lead to
uncertainty about the models' parameters.

To model this uncertainty, we assume that the algorithm is presented
with a set \IMFSET of influence functions, 
and assured that one of these functions actually describes the
influence process, but not told \emph{which} one.
The set \IMFSET could be finite or infinite.
A finite \IMFSET could result from a finite set of different
information diffusion models that are being considered,  
or from of a finite number of different contexts under which
the individuals were observed (e.g., word-of-mouth cascades for
different topics or products), 
or from a finite number of different inference algorithms or algorithm
settings being used to infer the model parameters from observations.
An infinite (even continuous) \IMFSET arises if each model parameter
is only known to lie within some given interval;
this model of adversarial noise, which we call the \IntPer model, 
was recently proposed in \cite{InfluenceStability}.

Since the algorithm does not know \IMTRUE, in the \RobInfMax problem,
it must ``simultaneously optimize'' for all objective functions in
\IMFSET, in the sense of maximizing
%\begin{align} \label{eqn:robust-objective}
$\WCObj{\SeedS}  = \min_{\IMTRUE \in \IMFSET} 
\frac{\IMFuncTrue{\SeedS}}{\IMFuncTrue{\STRUE}}$,
%\end{align}
where $\STRUE \in \argmax_S \IMFuncTrue{S}$ is an optimal solution knowing
which function \IMTRUE is to be optimized. 
In other words, the selected set should simultaneously get as close as
possible to the optimal solutions for all possible objective functions.

\subsection{Our Approach and Results}
Our work is guided by the following overarching questions:
\begin{enumerate}
\item How well can the objective \WCOBJ be optimized in principle?
\item How well do simple heuristics perform in theory?
\item How well do simple heuristics perform in practice?
\item How do robustly and non-robustly optimized solutions differ
  qualitatively?
\end{enumerate}

We address these questions as follows. 
First, we show (in Section~\ref{sec:algorithm}) that unless the
algorithm gets to exceed the number of seeds $k$ by at least a factor
$\ln |\IMFSET|$, approximating the objective \WCOBJ to within a factor
$O(n^{1-\epsilon})$ is NP-hard for all $\epsilon > 0$.

However, when the algorithm does get to exceed the seed set target $k$
by a factor of $\ln |\IMFSET|$ (times a constant), much better 
bicriteria approximation guarantees can be obtained.\footnote{
A bicriteria algorithm gets to pick more nodes than the optimal
solution, but is only judged against the optimum solution with the
original bound $k$ on the number of nodes.}
Specifically, we show that a modification of an algorithm of
Krause et al.~\cite{krause:mcmahan:guestrin:gupta} 
%results in an algorithm that 
uses $O(k \ln |\IMFSET|)$ seeds and finds a seed set whose 
influence is within a factor $(1-1/e)$ of optimal.

%The algorithm has a running time that scales linearly in the
%size of the set \IMFSET of possible influence functions. 
%Thus, it cannot be used when \IMFSET is infinite. 
%We therefore also show how to reduce the model of parameter intervals
%to a finite set \IMFSET over which optimization is performed, by
%establishing that the worst-case input \IMTRUE always has each
%parameter as large or as small as possible, across a wide class of
%models. This approach reduces the running time of the Krause et
%al.~algorithm from infinite to exponential, and we present heuristics
%(without provable guarantees) to improve it much further in practice.

We also investigate two straightforward heuristics:
\begin{enumerate}
\item Run a greedy algorithm to optimize \WCOBJ
  directly, picking one node at a time.
\item For each objective function $\IMFUNC \in \IMFSET$, find a set
  $S_{\IMFUNC}$ (approximately) maximizing \IMFunc{S_{\IMFUNC}}.
  Evaluate each of these sets under \WCObj{S_{\IMFUNC}}, 
  and keep the best one.
\end{enumerate}

We first exhibit instances on which both of the heuristics perform very poorly.
Next (in Section~\ref{sec:experiment}), 
we focus on more realistic instances , exemplifying the types of scenarios
under which robust optimization becomes necessary.
In the first set of experiments, we infer influence networks on
a fixed node set from Twitter cascades on different \emph{topics}. 
Individuals' influence can vary significantly based on the topic, 
and for a previously unseen topic, 
it is not clear which inferred influence network to use.
In additional sets of experiments, we derive data sets from
  the same MemeTracker data~\cite{leskovec2009meme}, but use 
different \emph{time slices}, different \emph{inference algorithms and
parametrizations}, and different samples from confidence intervals.

The main outcome of the experiments is that while the algorithm with
robustness as a design goal typically (though not even always)
outperforms the heuristics, the margin is often quite small. 
Hence, heuristics may be viable in practice, when the influence
functions are reasonably similar.
A visual inspection of the nodes chosen by different algorithms
reveals how the robust algorithm ``hedges its bets'' across models,
while the non-robust heuristic tends to cluster selected nodes in one
part of the network.

\subsection{Stochastic vs.~Adversarial Models}
Given its prominent role in our model, the decision to treat the
choice of \IMTRUE as adversarial rather than stochastic deserves
some discussion.

First, adversarial guarantees are stronger than stochastic guarantees,
and will lead to more robust solutions in practice. 
Perhaps more importantly, inferring a Bayesian prior over influence
functions in \IMFSET will run into exactly the type of problem
we are trying to address in the first place: data are sparse and
noisy, and if we infer an incorrect prior, it may lead to very
suboptimal results. Doing so would next require us to establish
robustness over the values of the \emph{hyperparameters} of the
Bayesian prior over functions.

Specifically for the \IntPer model, one may be tempted to treat the
parameters as drawn according to some distribution over their possible range.
This approach was essentially  taken in
\cite{adiga:kuhlman:mortveit:vullikanti,goyal:bonchi:lakshmanan:dataInfMax}.
Adiga et al.~\cite{adiga:kuhlman:mortveit:vullikanti} assume
that for each edge $e$ independently, its presence/absence was
misobserved with probability $\epsilon$, 
whereas Goyal et al.~\cite{goyal:bonchi:lakshmanan:dataInfMax} assume
that for each edge, the actual parameter is perturbed with independent
noise drawn uniformly from a known interval.
In both cases, under the Independent Cascade model (for example), the
edge activation probability can be replaced with the \emph{expected}
edge activation probability under the random noise model, which will
provably lead to the exact same influence function \IMFUNC.
Thus, independent noise for edge parameters, drawn from a known
distribution, does not augment the model in the sense of capturing
robustness. In particular, it does not capture uncertainty
in a meaningful way.

To model the type of issues one would expect to arise in
real-world settings, at the very least, noise must be correlated
between edges. For instance, certain subpopulations may be inherently
harder to observe or have sparser data to learn from.
However, correlated random noise would result in a more complex
description of the noise model, and thus make it harder to actually
learn and verify the noise model. In particular, as discussed above,
this would apply given that the noise model itself must be learned
from noisy data.
\section{Related Work}
\label{sec:related}
Based on the early work of Domingos and Richardson
\cite{domingos:richardson,richardson:domingos},
Kempe et al.~\cite{InfluenceSpread} formally defined the
problem of finding a set of influential individuals as a discrete
optimization problem, proposing a greedy algorithm with a $1-1/e$
approximation guarantee for the Independent Cascade
\cite{goldenberg:libai:muller:talk,goldenberg:libai:muller:complex}
and Linear Threshold \cite{granovetter:threshold-models} models. 
A long sequence of subsequent work focused on more efficient
algorithms for \InfMax (both with and without approximation guarantees) 
and on broadening the class of models for which guarantees
can be obtained \cite{borgs:brautbar:chayes:lucier,chen:wang:yang:efficient,chen:yuan:zhang:scalable,InfluenceSpread,khanna:lucier:influence-maximization,mossel:roch:submodular,chen:wang:wang:prevalent,wang:cong:song:xie}.
See the recent book by Chen et al.~\cite{chen:lakshmanan:castillo:influence-maximization-book}
and the survey in \cite{InfluenceSpread} for more detailed overviews.

As a precursor to maximizing influence, one needs to infer the
influence function \IMFUNC from observed data.
The most common approach is to estimate the parameters of a particular
diffusion model~\cite{Abrahao:Chierichetti:Kleinberg:Panconesi:tracecomplexity,
  daneshmand14netrate,
  gomez-rodriguez:balduzzi:schoelkopf:uncovering,
  gomez-rodriguez:leskovec:krause:inferring,
  Harikrishna:David:Yaron:learnfunc,
  netrapalli2012learning,
  saito:kimura:ohara:motoda:selecting}. 
Theoretical bounds on the required sample complexity for many
diffusion models have been established, including
\cite{  Abrahao:Chierichetti:Kleinberg:Panconesi:tracecomplexity,
  Harikrishna:David:Yaron:learnfunc,
  netrapalli2012learning} for the Discrete-Time Independent Cascade (DIC) model, 
\cite{daneshmand14netrate} for the Continuous-Time Independent
  Cascade (CIC) model, and
\cite{Harikrishna:David:Yaron:learnfunc} for the Linear Threshold
model. 
However, it remains difficult to decide which diffusion
models fit the observation best.
Moreover, the diffusion models only serve as a rough approximation to
the real-world diffusion process. 
In order to sidestep the issue of diffusion models, 
Du et al.~\cite{nan:Yingyu:Maria-Florina:le:inflearn}
recently proposed to directly learn the influence function \IMFUNC
from the observations, without assuming any particular
diffusion model.
%In their work, 
They only assume that the influence function is a
weighted average of coverage functions. 
While their approach provides polynomial sample complexity, 
they require a strong technical condition on finding an
accurate approximation to the reachability distribution.
Hence, their work remains orthogonal to the issue of \RobInfMax.

Several recent papers take first steps toward \InfMax under
uncertainty. 
Goyal, Bonchi and Lakshmanan~\cite{goyal:bonchi:lakshmanan:dataInfMax}
and Adiga et al.~\cite{adiga:kuhlman:mortveit:vullikanti} study 
\emph{random} (rather than adversarial) noise models, in which either
the edge activation probabilities \ActProbD{u,v} are perturbed with
random noise \cite{goyal:bonchi:lakshmanan:dataInfMax}, 
or the presence/absence of edges is flipped with a known probability
\cite{adiga:kuhlman:mortveit:vullikanti}. 
Neither of the models truly extends the underlying diffusion models,
as the uncertainty can simply be absorbed into the probabilistic
activation process.

Another approach to dealing with uncertainty 
is to carry out multiple influence campaigns, and to use the
observations to obtain better estimates of the model parameters.
Chen et al.~\cite{Chen:Wang:Yuan:Wang} model the problem as a
combinatorial multi-armed bandit problem and use the UCB1
algorithm with regret bounds. 
Lei et al.~\cite{Lei:Maniu:Mo:Cheng:Senellart} instead incorporate
beta distribution priors over the activation probabilities
into the DIC model. 
They propose several strategies to update the posterior distributions
and give heuristics for seed selection in each trial so as to balance
exploration and exploitation.
Our approach is complementary: even in an exploration-based
  setting, there will always be residual uncertainty, in particular
  when exploration budgets are limited.

%The special case of the
The adversarial \IntPer model was recently
proposed in work of the authors~\cite{InfluenceStability}. 
The focus in that work was not on robust optimization, but on
algorithms for detecting whether an instance was likely to suffer from
high instability of the optimal solution.
Optimization for multiple scenarios was also recently used in work by
Chen et al.~on tracking influential nodes as the structure of the
graph evolves over time~\cite{ChenSHX15}.
However, the model explicitly allowed updating the seed set over time,
while our goal is simultaneous optimization.

Simultaneously to the present work, 
Chen et al.~\cite{chen:lin:tan:zhao:zhou} 
and Lowalekar et al.~\cite{Lowalekar:Varakantham:Kumar:RIM}
have been  studying the \RobInfMax problem 
under the \IntPer model \cite{InfluenceStability}.
Their exact formulations are somewhat different.
The main result of 
Chen et al.~\cite{chen:lin:tan:zhao:zhou}
is an analysis of the heuristic of choosing the best solution among
three candidates: 
make each edge's parameter as small as possible, as large as possible,
or equal to the middle of its interval. 
They prove \emph{solution-dependent} approximation
guarantees for this heuristic.  

The objective of Lowalekar et al.~\cite{Lowalekar:Varakantham:Kumar:RIM}
is to minimize the maximum regret instead of maximizing the minimum ratio. 
They propose a heuristic based on constraint generation ideas to solve
the robust influence maximization problem. 
The heuristic does not come with approximation guarantees; instead,
\cite{Lowalekar:Varakantham:Kumar:RIM} proposes a solution-dependent
measure of robustness of a given seed set. 
As part of their work, \cite{Lowalekar:Varakantham:Kumar:RIM} prove a
result similar to our Lemma~\ref{LEM:INTERVAL-ENDPOINTS}, showing that
the worst-case instances all have the largest or smallest possible
values for all parameters. 
\section{Models and Problem Definition}
\label{sec:models}

\subsection{Influence Diffusion Models}
For concreteness, we focus on two diffusion models:
the discrete-time Independent Cascade model (DIC) \cite{InfluenceSpread}
and the continuous-time Independent Cascade model (CIC) \cite{gomez-rodriguez:leskovec:krause:inferring}. 
Our framework applies to most other diffusion models; 
in particular, most of the concrete results carry over to the discrete and continuous
Linear Threshold models \cite{InfluenceSpread,saito:kimura:ohara:motoda:selecting}.

Under the DIC model, the diffusion process unfolds in discrete time
steps as follows: when a node $u$ becomes active in step $t$, it
attempts to activate all currently inactive neighbors in step $t+1$.
For each neighbor $v$, it succeeds with a known probability \ActProbD{u,v}; 
the \ActProbD{u,v} are the parameters of the model. 
If node $u$ succeeds, $v$ becomes active.
Once $u$ has made all its attempts, it does not get to make further
activation attempts at later times; of course, the node $v$ may well
be activated at time $t+1$ or later by some node other than $u$.

The CIC model describes a continuous-time process. 
Associated with each edge $(u,v)$ is a delay distribution with
parameter \DelayP{u,v}. 
When a node $u$ becomes newly active at time $t_u$,
for every neighbor $v$ that is still inactive, 
a delay time \Delay{u,v} is drawn from the delay distribution.
\Delay{u,v} is the duration it takes $u$ to activate $v$, which
could be infinite (if $u$ does not succeed in activating $v$).
Commonly assumed delay distributions include
the Exponential distribution or Rayleigh distribution. 
If multiple nodes $u_1, \ldots, u_\ell$ attempt to activate $v$, then
$v$ is activated at the earliest time $\min_i t_{u_i} + \Delay{u_i,v}$.
Nodes are considered activated by the process if they are activated
within a specified observation window $[0,T]$.

A specific instance is described by the class of its influence model 
(such as DIC, CIC, or others not discussed here in detail) 
and the setting
%\Params
of the model's parameters; 
in the DIC and CIC models above, the parameters would be
the influence probabilities \ActProbD{u,v} and the parameters
\DelayP{u,v} of the edge delay distributions, respectively.
Together, they completely specify the dynamic process; 
and thus a mapping \IMFUNC from initially active sets \SeedS to the expected
number\footnote{The model and virtually all results in the literature
  extend straightforwardly when the individual nodes are assigned
  non-negative importance scores.}
\IMFunc{\SeedS} of nodes active at the end of the process.
We can now formalize the \InfMax problem as follows:

\begin{definition}[\InfMax]
Maximize the objective \IMFunc{\SeedS}
subject to the constraint $|\SeedS| \leq k$.
\end{definition}

For most of the diffusion models studied in the literature, including
the DIC \cite{InfluenceSpread} and CIC \cite{DuSonGomZha13} models, 
it has been shown that \IMFunc{\SeedS} is a monotone and
submodular\footnote{Recall that a set function $f$ is monotone iff
  $f(S) \leq f(T)$ whenever $S \subseteq T$, and is submodular
  iff $f(S \cup \{x\}) - f(S) \geq f(T \cup \{x\}) - f(T)$ 
  whenever $S \subseteq T$.}
function of \SeedS. 
These properties imply that a greedy approximation algorithm
guarantees a $1-1/e$ approximation \cite{nemhauser:wolsey:fisher}.

\subsection{Robust Influence Maximization}
\label{sec:robust-definition}
The main motivation for our work is that often, \IMFUNC is not
precisely known to the algorithm trying to maximize influence.
There may be a (possibly infinite) number of candidate functions \IMFUNC, 
resulting from different diffusion models or parameter settings.
We denote the set of all candidate influence functions\footnote{For
  computation purposes, we assume that the functions are represented
  compactly, for instance, by the name of the diffusion model and all
  of its parameters.} by \IMFSET.
We now formally define the \emph{Robust Influence Maximization}
problem.

\begin{definition}[Robust Influence Maximization]
\label{def:robust}
Given a set \IMFSET of influence functions, maximize the objective
\begin{align*}
\WCObj{S} & = \min_{\IMFUNC \in \IMFSET} \frac{\IMFunc{S}}{\IMFunc{{\OptS{\IMFUNC}}}},
\end{align*}
subject to a cardinality constraint $|S| \leq k$. 
Here \OptS{\IMFUNC} is a seed set with $|\OptS{\IMFUNC}| \leq k$
maximizing \IMFunc{\OptS{\IMFUNC}}.
\end{definition}

A solution to the Robust Influence Maximization problem achieves a
large fraction of the maximum possible influence (compared to the
optimal seed set) under all diffusion settings simultaneously.
Alternatively, the solution can be interpreted as solving
the \InfMax problem when the function \IMFUNC is chosen from \IMFSET
by an adversary.

%While most of the positive results on \InfMax in the literature are
%derived for models under which the function \IMFUNC is submodular and monotone, 
While Definition~\ref{def:robust} per se does not require the 
$\IMFUNC \in \IMFSET$ to be submodular and monotone, these properties
are necessary to obtain positive results.
Hence, we will assume here that all $\IMFUNC \in \IMFSET$ are monotone
and submodular, as they are for standard diffusion models.
%However, in order to obtain any positive results for approximating of
%\WCOBJ, one will typically want to restrict the functions in such a way.
Notice that even then, \WCOBJ is the minimum of submodular functions,
and as such not necessarily submodular itself \cite{krause:mcmahan:guestrin:gupta}.

A particularly natural and important special case of
Definition~\ref{def:robust} is the \IntPer model recently proposed in
\cite{InfluenceStability}. 
Here, the influence model is known (for concreteness, DIC), but there
is uncertainty about its parameters. 
For each edge $e$, we have an interval $\ID{e} = [\LBD{e}, \UBD{e}]$,
and the algorithm only knows that the parameter (say, \ActProbD{e})
lies in \ID{e}; the exact value is chosen by an adversary.
Notice that \IMFSET is (uncountably) infinite under this model.
While this may seem worrisome, the following lemma shows that we only
need to consider finitely (though exponentially) many functions:

\begin{lemma} \label{LEM:INTERVAL-ENDPOINTS}
Under the \IntPer model for DIC\footnote{The result carries over
  with a nearly identical proof to the Linear Threshold model. We currently do not know if it also extends to the CIC model.}, 
the worst case for the ratio in \WCOBJ for
any seed set \SeedS is achieved by making each \ActProbD{e} equal to
\LBD{e} or \UBD{e}.
\end{lemma}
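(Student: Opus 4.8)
\section*{Proof proposal}

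The plan is to show that the minimizing (worst-case) parameter vector can always be taken at a vertex of the box $\prod_e \ID{e}$, i.e.\ with each \ActProbD{e} equal to \LBD{e} or \UBD{e}. Write $\bm p = (\ActProbD{e})_e$ for the parameter vector, let $\sigma_{\bm p}(\cdot)$ be the DIC influence function under these parameters, and let $S^*_{\bm p}$ be an optimal seed set of size at most $k$. For the fixed seed set \SeedS, the quantity to be minimized over the box is
\[
\Phi(\bm p)\;=\;\frac{\sigma_{\bm p}(\SeedS)}{\sigma_{\bm p}(S^*_{\bm p})}\;=\;\frac{\sigma_{\bm p}(\SeedS)}{\max_{|T|\le k}\sigma_{\bm p}(T)},
\qquad \WCObj{\SeedS}=\min_{\bm p\in\prod_e\ID{e}}\Phi(\bm p).
\]
The first ingredient is the standard live-edge view of DIC: declaring each edge $e$ live independently with probability \ActProbD{e}, we have $\sigma_{\bm p}(T)=\sum_{L}\bigl(\prod_{e\in L}\ActProbD{e}\prod_{e\notin L}(1-\ActProbD{e})\bigr)\,R_T(L)$, where $R_T(L)$ is the number of nodes reachable from $T$ in the live-edge subgraph $L$. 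Hence each $\sigma_{\bm p}(T)$ is a \emph{multilinear} polynomial in the coordinates of $\bm p$; in particular, freezing all coordinates but one, $\sigma_{\bm p}(T)$ is an affine function of the single variable \ActProbD{e}, and it is strictly positive on \ID{e} (indeed $\sigma_{\bm p}(T)\ge|T|\ge 1$).

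Next I would reduce to a single coordinate. Fix an edge $e$, freeze all other parameters, and write $x=\ActProbD{e}\in\ID{e}$. As a function of $x$, the numerator $N(x)=\sigma_{\bm p}(\SeedS)$ is affine, while the denominator $D(x)=\max_{|T|\le k}\sigma_{\bm p}(T)$ is a pointwise maximum of finitely many affine functions, hence convex and piecewise linear; both are positive. I claim $x\mapsto N(x)/D(x)$ attains its minimum over \ID{e} at an endpoint. Indeed, for any threshold $t\ge 0$ the superlevel set $\{x:N(x)/D(x)\ge t\}=\{x:N(x)-t\,D(x)\ge 0\}$ is cut out by a concave inequality (an affine function minus a nonnegative multiple of a convex function), so it is an interval; thus $N/D$ is quasiconcave, hence unimodal, and its minimum over the closed interval \ID{e} is attained at \LBD{e} or \UBD{e}. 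An equivalent elementary phrasing avoids the word quasiconcave: let $S^\star$ be the maximizer at an interior point $x_0$, so the affine function $L(x)=\sigma_{\bm p}(S^\star)$ satisfies $L\le D$ with $L(x_0)=D(x_0)$; since the ratio $N/L$ of two affine functions has a constant-sign derivative and is therefore monotone, some endpoint $y$ has $N(y)/L(y)\le N(x_0)/L(x_0)$, and then $N(y)/D(y)\le N(y)/L(y)\le N(x_0)/L(x_0)=N(x_0)/D(x_0)$.

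Finally I would upgrade this to the full box by an exchange/rounding argument. Given any $\bm p$ in the box, process the edges in turn; when processing $e$, the single-coordinate claim (with all other coordinates at their current values) lets us move \ActProbD{e} to an endpoint of \ID{e} without increasing $\Phi$, and this never disturbs coordinates already set to endpoints. After all edges are processed we reach a vertex $\bm v$ with $\Phi(\bm v)\le\Phi(\bm p)$; since there are only finitely many vertices, the minimum of $\Phi$ over the box is attained at one of them, which is exactly the claim.

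The step I expect to be the crux is the single-coordinate claim, precisely because the denominator's maximizer $S^*_{\bm p}$ shifts as \ActProbD{e} varies, so $D(x)$ is \emph{not} a single affine function and one cannot simply appeal to monotonicity of an affine-over-affine ratio. The resolution is that $D$ is nonetheless \emph{convex} (a max of affine pieces), which is what makes the affine-over-convex ratio quasiconcave and hence endpoint-minimized; the dominating-line argument packages the same fact without the quasiconcavity vocabulary. I would also verify the minor bookkeeping that $N$ and $D$ stay strictly positive throughout (guaranteed by $\sigma_{\bm p}(T)\ge|T|\ge1$), and note that the identical argument transfers to the Linear Threshold model via its own live-edge representation, which is why the footnoted claim for LT holds by a nearly identical proof.
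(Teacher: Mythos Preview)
Your proposal is correct and follows essentially the same route as the paper's proof: both use the live-edge (triggering set) representation to obtain that $\sigma_{\bm p}(T)$ is affine in each single coordinate, observe that the denominator is a pointwise maximum of affine functions and hence convex, and conclude via convexity of the superlevel sets that the affine-over-convex ratio is quasiconcave and thus minimized at an endpoint, iterating edge by edge. Your write-up is a bit more explicit about positivity and offers the dominating-line variant as an alternative phrasing, but the structure and key idea are identical to the paper's argument.
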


\begin{proof}
Fix one edge $\hat{e}$, and consider an assignment (fixed for now)
$\ActProbD{e} \in \ID{e}$ of activation probabilities to all edges
$e \neq \hat{e}$. 
Let $x \in \ID{\hat{e}}$ denote the (variable) activation probability for
edge $e$.
First, fix any seed set \SeedS, and define 
$f_{\SeedS}(x)$ to be the expected number of nodes activated by \SeedS
when the activation probabilities of all edges $e \neq \hat{e}$ are 
\ActProbD{e} and the activation probability of $\hat{e}$ is $x$.

We express $f_{\SeedS}(x)$ using the triggering set
\cite[Section 4.1]{InfluenceSpread} approach. 
Let $\mathcal{G}$ be the set of all possible directed graphs on the
given node set $V$. 
For any graph $G$, let $R_G(S)$ be the number of nodes reachable
from $S$ in $G$ via a directed path, 
and let $P(G)$ be the probability that graph $G$ is obtained when each
edge $e$ is present in $G$ independently with probability \ActProbD{e}
(or $x$, if $e = \hat{e}$).
By the triggering set technique \cite[Proof of Theorem 4.5]{InfluenceSpread}, 
we get that 
\begin{align*}
f_{\SeedS}(x) & = \sum_{G \in \mathcal{G}} P(G) \cdot R_G(S).
\end{align*}
The probabilities $P(G)$ for obtaining a graph $G$ are:
\begin{align*}
P(G) & = (1-x) \cdot \prod_{e \in G} \ActProbD{e} 
               \cdot \prod_{e \notin G, e \neq \hat{e}} (1-\ActProbD{e})
& \mbox{ when } \hat{e} \notin G;\\
P(G) & = x \cdot \prod_{e \in G, e \neq \hat{e}} \ActProbD{e} 
           \cdot \prod_{e \notin G} (1-\ActProbD{e})
& \mbox{ when } \hat{e} \in G.
\end{align*}
In either case, we obtain a linear function of $x$, so that
$f_{\SeedS}(x)$, being a sum of linear functions, is also linear in
$x$.

Therefore, the function $g(x) := \max_{\SeedS} f_{\SeedS}(x)$,
being a maximum of linear functions of $x$, is convex and piecewise linear.
Consider any fixed seed set \SeedS, and the ratio
$h(x) := \frac{f_{\SeedS}(x)}{g(x)}$.
Its $\alpha$-level set $\Set[h(x) \geq \alpha]{x}$ is equal to
$\Set[g(x) - 1/\alpha \cdot f_{\SeedS}(x) \leq 0]{x}$.
Because $g(x) - 1/\alpha \cdot f_{\SeedS}(x)$, a convex function minus
a linear function, is convex, its 0-level set is convex.
Hence, all $\alpha$-level sets of $h$ are convex, and $h$ is
quasi-concave.

Because $h$ is quasi-concave, it is unimodal, and thus minimized at
one of the endpoints of the interval.
Hence, we can minimize the ratio $h(x)$ --- and thus the performance
of the seed set \SeedS --- by making $x$ either as small or as large
as possible. 
By repeating this argument for all edges $\hat{e}$ one by one, we
arrive at an influence setting minimizing the performance of \SeedS,
and in which all influence probabilities are equal to the left or
right endpoint of the respective interval \ID{\hat{e}}. 
\end{proof}

%\begin{proof}
%Fix one edge $\hat{e}$, and a seed set \SeedS.
%Fix the activation probabilities on all edges except $\hat{e}$,
%and consider the function $f_{\SeedS}(x)$, the expected number of
%nodes activated by \SeedS when the activation probabilities of all
%edges $e \neq \hat{e}$ are as fixed, and the activation probability of
%$\hat{e}$ is $x$. 
%
%By explicitly writing the expectation as a distribution over live edge
%graphs (see \cite{InfluenceSpread}), one can observe that
%$f_{\SeedS}(x)$ is a linear function of $x$.
%Hence, the optimum influence (over all \SeedS) $g(x)$, being a maximum
%of linear functions, is convex.
%This means that the ratio $\frac{f_{\SeedS}(x)}{g(x)}$ is quasi-concave, which
%one can show by considering its level sets, and noticing that
%$\Set[f_{\SeedS}(x)/g(x) \geq \alpha]{x} = 
%\Set[g(x) - 1/\alpha \cdot f_{\SeedS}(x) \leq 0]{x}$.
%The latter is a level set of a convex function, and thus convex.
%
%Finally, because $f_{\SeedS}(x)/g(x)$ is quasi-concave, it attains its
%minimum at the smallest or largest value of $x$. 
%By repeating this argument for all edges $\hat{e}$, we obtain a
%worst-case setting in which all parameter values are equal to the left
%or right endpoints of the respective intervals \ID{e}. 
%\end{proof}

\section{Algorithms and Hardness} \label{sec:algorithm}
Even when \IMFSET contains just a single function \IMFUNC, 
\RobInfMax is exactly the traditional \InfMax problem, and is thus NP-hard.
This issue also appears in a more subtle way:
\emph{evaluating} \WCObj{\SeedS} (for a given \SeedS)
involves taking the minimum of
$\frac{\IMFunc{\SeedS}}{\IMFunc{\OptS{\IMFUNC}}}$ over all
$\IMFUNC \in \IMFSET$. 
It is not clear how to calculate the ratio
$\frac{\IMFunc{\SeedS}}{\IMFunc{\OptS{\IMFUNC}}}$ even for one of the
\IMFUNC, since the scaling constant \IMFunc{\OptS{\IMFUNC}} 
(which is independent of the chosen \SeedS) is exactly the solution to
the original \InfMax problem, and thus NP-hard to compute.

This problem, however, is fairly easy to overcome: 
instead of using the true optimum solutions \OptS{\IMFUNC} for the
scaling constants, we can compute $(1-1/e)$-approximations
\GreS{\IMFUNC} using the greedy algorithm, because the \IMFUNC are
monotone and submodular \cite{nemhauser:wolsey:fisher}.
Then, because $(1-1/e) \cdot \IMFunc{\OptS{\IMFUNC}}
\leq \IMFunc{\GreS{\IMFUNC}} \leq \IMFunc{\OptS{\IMFUNC}}$ for all
$\IMFUNC \in \IMFSET$, we obtain that the ``greedy objective
function''
\begin{align*}
\WCObjG{S} & = \min_{\IMFUNC \in \IMFSET} \frac{\IMFunc{S}}{\IMFunc{\GreS{\IMFUNC}}},
\end{align*}
satisfies the following property for all sets $S$:
\begin{align}
1-1/e) \cdot \WCObjG{S} & \leq \WCObj{S} \; \leq \; \WCObjG{S}.
\label{eqn:greedy-approximation}
\end{align}
Hence, optimizing \WCObjG{S} in place of \WCObj{S} comes at a cost of
only a factor $(1-1/e)$ in the approximation guarantee.
We will therefore focus on solving the problem of (approximately)
optimizing \WCObjG{S}.

Because each \IMFUNC is monotone and submodular, 
and the \IMFunc{\GreS{\IMFUNC}}, just like the \IMFunc{\OptS{\IMFUNC}},
are just scaling constants, 
\WCObjG{S} is a minimum of monotone submodular functions.
However, we show 
(in Theorem~\ref{THM:HARDNESS}, proved in Appendix~\ref{sec:hardness}) 
that even in the context of \InfMax, this minimum is
impossible to approximate to within any polynomial factor.
This holds even in a bicriteria sense, i.e., the
algorithm's solution is allowed to pick $(1-\delta) \ln |\IMFSET| \cdot k$
nodes, but is compared only to solutions using $k$ nodes.
The result also extends to the seemingly more restricted
\IntPer model, giving an almost equally strong bicriteria
approximation hardness result there.

\begin{theorem} \label{THM:HARDNESS}
Let $\delta, \epsilon > 0$ be any constants, 
and assume that $\mbox{P} \neq \mbox{NP}$.
There are no polynomial-time algorithms for the following problems:

\begin{enumerate}
\item Given $n$ nodes and a set \IMFSET of influence functions on these
nodes (derived from the DIC or CIC models), as well as a target size $k$.
Find a set $S$ of $|S| \leq (1 - \delta) \ln |\IMFSET| \cdot k$ nodes,
such that $\WCObj{S} \geq \WCObj{S^*} \cdot \Omega(1/n^{1-\epsilon})$,
where $S^*$ is the optimum solution of size $k$.
\item Given a graph $G$ on $n$ nodes and intervals \ID{e} for edge
activation probabilities under the DIC model (or intervals \ID{e} for edge delay parameters under the CIC model), as well as a target
size $k$.
Find a set $S$ of cardinality 
$|S| \leq \epsilon \cdot c \cdot \ln n \cdot k$ 
(for a sufficiently small fixed constant $c$)
such that $\WCObj{S} \geq \WCObj{S^*} \cdot \Omega(1/n^{1-\epsilon})$,
where $S^*$ is the optimum solution of size $k$.
\end{enumerate}
\end{theorem}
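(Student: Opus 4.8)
The plan is to reduce from \SETCOVER, exploiting the fact that \WCOBJ (equivalently \WCOBJG) is a minimum of monotone submodular functions, which is precisely the setting in which Krause et al.~\cite{krause:mcmahan:guestrin:gupta} show the robust objective to be inapproximable unless the budget is relaxed by a logarithmic factor. Concretely, I would start from the gap hardness of \SETCOVER: for every $\delta>0$ it is NP-hard to distinguish instances (with a universe of size $m$) admitting a cover of size $k$ from instances in which every cover has size exceeding $(1-\delta)\ln m \cdot k$. The factor $(1-\delta)\ln m$ is exactly what becomes the bicriteria budget $(1-\delta)\ln|\IMFSET|\cdot k$, because the reduction will create one influence function per universe element, so that $|\IMFSET| = m$.

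For the first statement (finite \IMFSET from DIC/CIC), I would build a gadget on a shared node set. For each set $T_i$ introduce a ``set node'' $v_i$, and for each element $u_j$ introduce a ``reward cluster'' $C_j$ of $M$ nodes with \emph{no} internal edges, where $M$ is a large polynomial fixed later. Define one influence function $\IMFUNC[j]$ per element by giving probability $1$ to exactly the edges from every covering set node $v_i$ (those with $u_j\in T_i$) to all of $C_j$, and probability $0$ to every other edge. Then $\IMFuncD{j}{S}$ counts the seeds plus, whenever $S$ contains a covering set node, all of $C_j$; in particular $\IMFuncD{j}{\OptS{j}} = M + \Theta(k)$, which is computable \emph{exactly} because all probabilities are $0/1$.

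Completeness and soundness then mirror the \SETCOVER gap. If a size-$k$ cover exists, the corresponding $k$ set nodes form a seed set $S^*$ with $\IMFuncD{j}{S^*}\ge M$ for every $j$, so $\WCObj{S^*} = \Omega(1)$. Conversely, because the clusters carry no internal edges, the only way to make some $\IMFuncD{j}{S}$ large without a covering set node is to directly seed $\Omega(M)$ nodes of $C_j$; choosing $M$ larger than the entire bicriteria budget $B := (1-\delta)\ln m \cdot k$ rules this out. Hence, in the soundness case, any $S$ with $|S|\le B$ leaves some element $j$ both uncovered and essentially un-seeded, forcing $\IMFuncD{j}{S}\le B$ and therefore $\WCObj{S}\le B/M$. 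Taking $M$ polynomially large makes $n=\Theta(mM)$ while $B/M = O(1/n^{1-\epsilon})$, so any algorithm meeting the promised guarantee would separate the two \SETCOVER cases; the separation is decidable in polynomial time precisely because the $0/1$ probabilities make every $\IMFuncD{j}{\cdot}$ and its optimum easy to evaluate. The CIC version follows by replacing ``probability $1$/$0$'' edges with ``delay $0$/$\infty$'' relative to the window $[0,T]$.

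The hard part will be the second statement, the \IntPer model, where there is a single graph and \IMFSET is the uncountable family of endpoint settings of the edge intervals. By Lemma~\ref{LEM:INTERVAL-ENDPOINTS} I may restrict the adversary to endpoint assignments $\theta$, but unlike the finite case the adversary fixes all edge parameters at once rather than ``selecting one function,'' so the set-cover structure must be encoded so that, for a fixed spread-thin $S$, the single most damaging endpoint assignment $\theta$ corresponds to an uncovered element --- while the per-assignment optimum $\IMFuncD{\theta}{\OptS{\theta}}$ stays large, so that the ratio $\IMFuncD{\theta}{S}/\IMFuncD{\theta}{\OptS{\theta}}$ collapses only for seed sets that fail to robustly cover. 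My plan is to design a per-element gating gadget whose interval endpoints let the adversary expose the reward of whichever element $S$ handles worst, and then to argue that the number of genuinely distinct worst-case assignments is polynomial in $n$ (essentially one per element, with $m\le n$), so that $\ln|\IMFSET|$ is replaced by $\Theta(\ln n)$ and the bicriteria budget becomes $\epsilon\, c\, \ln n\cdot k$. I expect the main obstacle to be exactly this gadget: simultaneously guaranteeing (i) that the adversary can drive $\IMFuncD{\theta}{S}$ down for any inadequately covered $S$, and (ii) that a good response survives every endpoint assignment so that $\IMFuncD{\theta}{\OptS{\theta}}$ remains large, letting both completeness and soundness reduce to the same \SETCOVER gap as in the finite case.
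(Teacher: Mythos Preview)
Your Part~1 is correct and essentially identical to the paper's construction: one influence function per universe element, set nodes with probability-$1$ edges into a large reward cluster for that element, and the \SETCOVER gap translates directly into the ratio gap. The paper takes $m = (\max(N,M))^{3/\epsilon}$ reward nodes per element, matching your ``$M$ polynomially large.''

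For Part~2 your intuition is right but the gadget is missing, and the paper's construction is simpler than separate ``per-element gating gadgets.'' The key idea is a \emph{single shared} reward layer. Build a three-layer graph $X \cup Y \cup Z$: one node $x_T$ per set, one node $y_a$ per element, and $Z$ a single block of $m = (\max(N,M))^{2/\epsilon}$ nodes. The $X\to Y$ edges encode set membership with known probability~$1$. The crucial part: for \emph{every} pair $(y_a, z)$ there is an edge with the full uncertainty interval $\ID{(y_a,z)} = [0,1]$. Against a seed set $S$ that fails to activate some $y_a$, the adversary sets $\ActProbD{y_a,z}=1$ for all $z$ and $\ActProbD{y,z}=0$ for all other $y$; then $S$ reaches at most $N+|S|$ nodes while the single seed $\{y_a\}$ reaches all of $Z$. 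This realizes your ``one worst-case assignment per element'' observation through a shared reward block rather than per-element clusters. The $\ln n$ factor then comes not from counting ``genuinely distinct'' adversary choices but simply from the fact that the universe size $N$ is polynomially related to $n$, so the \SETCOVER gap $\frac{1}{2}\ln N$ is $\Omega(\epsilon \ln n)$.

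The step you flag as the main obstacle --- completeness --- is indeed the subtle one, and your proposal does not yet handle it. You need $\WCObj{S^*} \geq 1$ for the cover seed set $S^*$ \emph{simultaneously} over all endpoint assignments $\theta$, not just the $N$ canonical ones; Lemma~\ref{LEM:INTERVAL-ENDPOINTS} only says the minimum is attained at some endpoint vector, not that there are few of them. The paper's argument is a domination/coupling argument: because $S^*$ deterministically activates \emph{all} of $Y$, for any fixed probability assignment on $Y\to Z$ and any competing seed set $S'$ of size $k$, every node of $Z$ reached by $S'$ is also reached by $S^*$. Hence $\IMFuncD{\theta}{S^*} \geq \IMFuncD{\theta}{S'}$ for every $\theta$, giving ratio at least $1$ pointwise. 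Once you have this, soundness and completeness reduce to the \SETCOVER gap exactly as in Part~1.
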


The hardness results naturally apply to any
diffusion model that subsumes the DIC or CIC models.
However, an extension to the DLT model is not immediate:
the construction relies crucially on having many edges of probability
1 into a single node, which is not allowed under the DLT model.

\subsection{Bicriteria Approximation Algorithm}
\label{sec:algorithm:bicriteria}
Theorem~\ref{THM:HARDNESS} implies that to obtain any non-trivial
approximation guarantee, one needs to allow the algorithm to exceed
the seed set size by at least a factor of $\ln |\IMFSET|$.
In this section, we therefore focus on such bicriteria approximation
results, by slightly modifying an algorithm of 
Krause et al.~\cite{krause:mcmahan:guestrin:gupta}.

The slight difference lies in how the submodular coverage
subproblem is solved. 
Both~\cite{krause:mcmahan:guestrin:gupta} and the \MINTSS
algorithm~\cite{goyal:bonchi:lakshmanan:venkatasubramanian:minimizing}
greedily add elements.
However, the \MINTSS algorithm adds elements until
the desired submodular objective is attained up to an additive
$\varepsilon$ term, while~\cite{krause:mcmahan:guestrin:gupta}
requires exact coverage. 
Moreover, directly considering real-valued submodular
functions instead of going through fractional values leads to a more 
direct analysis of the \MINTSS
algorithm~\cite{goyal:bonchi:lakshmanan:venkatasubramanian:minimizing}.

The high-level idea of the algorithm is as follows.
Fix a real value $c$, and define
$h^{(c)}_{\IMFUNC}(S) := \min(c, \frac{\IMFunc{S}}{\IMFunc{\GreS{\IMFUNC}}})$
and $H^{(c)}(S) := \sum_{\IMFUNC \in \IMFSET} h^{(c)}_{\IMFUNC}(S)$.
Then, $\WCObjG{S} \geq c$ if and only if
$\frac{\IMFunc{S}}{\IMFunc{\GreS{\IMFUNC}}} \geq c$ for 
all $\IMFUNC \in \IMFSET$.
But because by definition, $h^{(c)}_{\IMFUNC}(S) \leq c$ for all \IMFUNC,
the latter is equivalent to $H^{(c)}(S) \geq |\IMFSET| \cdot c$.
(If any term in the sum is less than $c$, no other term can ever
compensate for it, because they are capped at $c$.)

Because $H^{(c)}(S)$ is a non-negative linear combination of the
monotone submodular functions $h^{(c)}_{\IMFUNC}$, 
it is itself a monotone and submodular function.
This enables the use of a greedy $\ln |\IMFSET|$-approximation
algorithm to find an (approximately) smallest set $S$ with 
$H^{(c)}(S) \geq c |\IMFSET|$.
If $S$ has size at most $k \ln |\IMFSET|$, 
this constitutes a satisfactory solution, and we move on to larger values of $c$.
If $S$ has size more than $k \ln |\IMFSET|$, then the greedy
algorithm's approximation guarantee ensures that there is no
satisfactory set $S$ of size at most $k$. 
Hence, we move on to smaller values of $c$.
For efficiency, the search for the right value of $c$ is done with
binary search and a specified precision parameter.

A slight subtlety in the greedy algorithm is that $H^{(c)}$ could take
on fractional values. 
Thus, instead of trying to meet the bound $c |\IMFSET|$ precisely,
we aim for a value of $c |\IMFSET| - \epsilon$.
Then, the analysis of the \MINTSS algorithm of 
Goyal et al.~\cite{goyal:bonchi:lakshmanan:venkatasubramanian:minimizing}
(of which our algorithm is an unweighted special case) applies.
The resulting algorithm \SATMINTSS is given as Algorithm~\ref{Alg:RIF}. 
The simple greedy subroutine --- a special case of the \MINTSS
algorithm --- is given as Algorithm~\ref{Alg:GM}.

\begin{algorithm}[!t]
\caption{\SATMINTSS ($\IMFSET$, $k$, precision $\gamma$)\label{Alg:RIF}}
\begin{algorithmic}[1]
%\STATE \textbf{Inputs:} A set of diffusion settings $\mathcal{D}=\{D_1,\ldots,D_m\}$, number of seeds $k$
%\\\ \ \ \ \ \ \ \ \ \ \ \ precision $\gamma$
%\STATE \textbf{Outputs:} Selected seed set $S^*$
\STATE Initialize $c_{\min}\leftarrow 0$, $c_{\max}\leftarrow 1$.
\WHILE{$(c_{\max}-c_{\min}) \geq \gamma$}
\STATE $c \leftarrow (c_{\max}+c_{\min})/2$.
\STATE Define $H^{(c)}(S) \leftarrow 
\sum_{\IMFUNC \in \IMFSET} \min (c, \frac{\IMFunc{S}}{\IMFunc{\GreS{\IMFUNC}}})$.
\STATE $S \leftarrow \mbox{\MINTSS} (H^{(c)}, k, c \cdot |\IMFSET|, c \cdot \gamma/3)$.
\IF{$|S| > \beta \cdot k$}
\STATE $c_{\max} \leftarrow c$.
\ELSE
\STATE $c_{\min} \leftarrow c \cdot (1 - \gamma/3)$, 
$S^* \leftarrow S$.
\ENDIF
\ENDWHILE
\STATE Return $S^*$.
\end{algorithmic}
\end{algorithm}
\begin{algorithm}[!t]
\caption{\MINTSS($f$, $k$, threshold $\eta$, error $\varepsilon$) \label{Alg:GM}}
\begin{algorithmic}[1]
%\STATE \textbf{Inputs:} Monotone submodular function $f(s)$, number of elements $k$ \\\ \ \ \ \ \ \ \ \ \ \ \ coverage threshold $\eta$, error $\varepsilon$
%\STATE \textbf{Outputs:} Selected item set $S$
\STATE Initialize $S \leftarrow \emptyset$.
\WHILE{$f(S) < \eta-\varepsilon$}
\STATE $u \leftarrow \argmax_{v \notin S} f(S \cup \{v\})$.
%\STATE $u \leftarrow \argmax_{v} \min\{f(S \cup \{v\}), \eta\} - f(S)$.
%\STATE $u \leftarrow \argmax_{w \notin S} \min\{F(S\cup\{w\})\}$
\STATE $S \leftarrow S \cup \{u\}$.
\ENDWHILE
\STATE Return $S$.
\end{algorithmic}
\end{algorithm}

By combining the discussion at the beginning of this section (about
optimizing \WCOBJ vs.~\WCOBJG) with the analysis of 
Krause et al.~\cite{krause:mcmahan:guestrin:gupta} and 
Goyal et al.~\cite{goyal:bonchi:lakshmanan:venkatasubramanian:minimizing},
we obtain the following approximation guarantee.
%The (straightforward) missing details are given in
%Appendix~\ref{sec:satmintss-correctness}.

\begin{theorem} \label{THM:ALG}
Let $\beta = 1 + \ln |\IMFSET| + \ln \frac{3}{\gamma}$.
\SATMINTSS finds a seed set $\hat{S}$ of size $|\hat{S}| \leq \beta k$ with

\begin{align*}
\WCObj{\hat{S}} & \geq (1-1/e) \cdot \WCObj{S^*} - \gamma,
\end{align*}
where $S^* \in \argmax_{S:|S|\leq k} \WCObj{S}$ is an optimal robust seed
set of size $k$.
\end{theorem}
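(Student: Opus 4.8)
The plan is to reduce everything to the greedy surrogate objective \WCOBJG and then argue that the binary search in \SATMINTSS brackets the optimal surrogate value up to precision $\gamma$, while the \MINTSS subroutine controls the set size. First I would invoke the sandwich bound \eqref{eqn:greedy-approximation}, which gives $(1-1/e)\WCObjG{S} \le \WCObj{S} \le \WCObjG{S}$ for every $S$. Applied to the optimum $S^*$ it yields $\WCObjG{S^*} \ge \WCObj{S^*}$, so writing $\rho^* = \WCObj{S^*}$ (and noting $\rho^* \le 1$) we have $\max_{|S| \le k} \WCObjG{S} \ge \rho^*$. Applied to the returned set $\hat S$ it gives $\WCObj{\hat S} \ge (1-1/e)\WCObjG{\hat S}$. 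Hence it suffices to prove $|\hat S| \le \beta k$ together with $\WCObjG{\hat S} \ge \rho^* - \gamma$, since then $\WCObj{\hat S} \ge (1-1/e)(\rho^* - \gamma) \ge (1-1/e)\rho^* - \gamma$, the claimed bound.

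Next I would analyze a single iteration at threshold $c$. Using the capping identity recorded before the algorithm, $\WCObjG{S} \ge c$ holds iff $H^{(c)}(S) \ge c|\IMFSET|$, and $H^{(c)}$ is monotone submodular; thus deciding whether a size-$\le k$ set has surrogate value at least $c$ is exactly a submodular-cover (\MINTSS) feasibility question with threshold $\eta = c|\IMFSET|$. The key external input is the \MINTSS guarantee of Goyal et al.~\cite{goyal:bonchi:lakshmanan:venkatasubramanian:minimizing}: run with slack $\varepsilon = c\gamma/3$, it returns a set $S$ with $H^{(c)}(S) \ge c|\IMFSET| - c\gamma/3$ and $|S| \le (1 + \ln(\eta/\varepsilon))\,|S_c^{\rm opt}|$, where $S_c^{\rm opt}$ is a smallest set meeting the full threshold $c|\IMFSET|$. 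Since $\eta/\varepsilon = 3|\IMFSET|/\gamma$, the size factor is precisely $\beta = 1 + \ln|\IMFSET| + \ln(3/\gamma)$.

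I would then extract the two facts the binary search needs and assemble the invariant. For the value bound, the total capping deficit $\sum_{\IMFUNC \in \IMFSET}\bigl(c - \min(c, \IMFunc{S}/\IMFunc{\GreS{\IMFUNC}})\bigr) = c|\IMFSET| - H^{(c)}(S)$ is at most $c\gamma/3$; since the worst function's term already equals $c - \WCObjG{S}$ whenever that is positive, we get $\WCObjG{S} \ge c(1 - \gamma/3)$, which is exactly the value written into $c_{\min}$ and certifies $\WCObjG{\hat S} \ge c_{\min}$ for the stored set. For the feasibility test, if some size-$\le k$ set has $\WCObjG{} \ge c$ then $|S_c^{\rm opt}| \le k$, so $|S| \le \beta k$ and a feasible candidate is correctly recorded; contrapositively, whenever $|S| > \beta k$ we must have $|S_c^{\rm opt}| > k$, hence $\max_{|S|\le k}\WCObjG{S} < c$. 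Because $c_{\max}$ is lowered only on the infeasible branch, the invariant $\rho^* \le \max_{|S|\le k}\WCObjG{S} < c_{\max}$ holds after the first such step (and trivially $c_{\max}=1\ge\rho^*$ before it), so $c_{\max} \ge \rho^*$ throughout. At termination $c_{\max} - c_{\min} < \gamma$, giving $\WCObjG{\hat S} \ge c_{\min} > c_{\max} - \gamma \ge \rho^* - \gamma$ with $|\hat S| \le \beta k$, which closes the argument through the reduction of the first paragraph.

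I expect the main obstacle to be the bookkeeping that fuses the two sources of slack without letting them compound beyond the stated $\gamma$: the additive \MINTSS error $\varepsilon = c\gamma/3$, which must be converted into the multiplicative value guarantee $\WCObjG{S} \ge c(1-\gamma/3)$ via the deficit argument, and the binary-search precision $\gamma$; alongside this, verifying that the \MINTSS size factor lands exactly on $\beta$ requires matching $\ln(\eta/\varepsilon)$ to $\ln|\IMFSET| + \ln(3/\gamma)$. One minor edge case deserves a remark: when $\rho^*$ is so small that $(1-1/e)\rho^* - \gamma \le 0$, the target is vacuous and no feasible candidate need be produced.
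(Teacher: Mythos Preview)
Your proposal is correct and follows essentially the same route as the paper's proof: reduce to \WCOBJG via the sandwich bound \eqref{eqn:greedy-approximation}, invoke the \MINTSS size and value guarantees to maintain the binary-search invariants on $c_{\min}$ and $c_{\max}$, and combine at termination. The one step the paper includes that you omit is a brief termination check for the binary search (because $c_{\min}$ is set to $c(1-\gamma/3)$ rather than $c$, one must verify the interval still shrinks geometrically; the paper shows it shrinks by a factor of at most $5/6$).
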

\begin{proof}
Algorithm~\ref{Alg:RIF} uses Algorithm~\ref{Alg:GM} (\MINTSS)
as a subroutine to find\footnote{%
Technically, the guarantees on \MINTSS depend on being able
  to evaluate $f$ precisely
  \cite[Theorem 1]{goyal:bonchi:lakshmanan:venkatasubramanian:minimizing}.
  However, 
  Theorem 2 of \cite{goyal:bonchi:lakshmanan:venkatasubramanian:minimizing}
  states that by obtaining $(1\pm \delta)$-approximations to $f$,
  we can ensure that 
  $|S| \leq (1+\delta') |S^*| \cdot (1+\ln\frac{\eta}{\varepsilon})$,
  where $\delta' \to 0$ as $\delta \to 0$.
  For influence coverage functions, arbitrarily close approximations
  to $f$ can be obtained by Monte Carlo simulations.
  We therefore ignore the issue of sampling accuracy in this article,
  and perform the analysis as though $f$ could be evaluated precisely.
  Otherwise, the approximations carry through in a straightforward
  way, leading to multiplicative factors $(1+\delta'')$.}
a solution $S$ such that 
$f(S) \geq \eta - \varepsilon$ and 
$|S| \leq |S^*| \cdot (1 + \ln \frac{\eta}{\varepsilon})$,
where $S^*$ is a smallest solution guaranteeing
$f(S^*) \geq \eta$.%

In light of the general outline and motivation for the \SATMINTSS
algorithm given above, it mostly
remains to verify how the guarantees for \MINTSS and the balancing of
the parameters carry through.

We will show that throughout the algorithm 
(or more precisely: the binary search), 
$c_{\min}$ always remains a lower bound on the solution for the
problem with the relaxed cardinality constraint,  
while $c_{\max}$ remains an upper bound on the solution for the
original problem.
In other words, there is no set $S$ of cardinality
at most $|S| \leq k$ with $\WCObj{S} > c_{\max}$,
and there \emph{is} a set $S$ of cardinality at most 
$|S| \leq \beta k$
with $\WCObj{S} \geq c_{\min}$.

To show this claim, consider the set $S$ returned by the \MINTSS
algorithm. 
If $|S| > \beta k$, the guarantee for \MINTSS implies that
$|S| \leq \beta |S^*|$, where $S^*$ is the optimal solution for
the instance.
Because $|S^*| \geq |S|/\beta > k$, the value $c$ is not feasible,
and the algorithm is correct in setting $c_{\max}$ to $c$.

Otherwise, $|S| \leq \beta k$, and the guarantee of \MINTSS implies
that $H^{(c)}(S) \geq c \cdot |\IMFSET| - c \cdot \gamma/3$.
Because each $h^{(c)}_{\IMFUNC}(S) \leq c$ by definition, we get
for all \IMFUNC,
\[
h^{(c)}_{\IMFUNC}(S) 
\; \geq \; H^{(c)}(S) - (|\IMFSET|-1) \cdot c
\; \geq \; c - c \cdot \gamma/3,
\]
and therefore $\WCObj{S} \geq c - c \cdot \gamma/3$.
This confirms the correctness of assigning 
$c_{\min}=c \cdot (1 - \gamma/3)$. 

Since we do not set $c_{\min} = c$, we need to briefly verify
termination of the binary search.
For any iteration in which we update $c_{\min}$,
let $\Delta := c_{\max} - c_{\min} \geq \gamma$.
When the new $c'_{\min}$ is set to 
$c \cdot (1-\gamma/3) \geq c - \gamma/3$, we get that 
$c_{\max} - c'_{\min} = (c_{\max} - c) + (c - c'_{\min})
\leq \Delta/2 + \gamma/3 \leq 5\Delta/6$.
Hence, the size of the interval keeps decreasing geometrically, and
the binary search terminates in $O(\log (1/\gamma))$ iterations.

At the time of termination, we obtain that
$|c^* - c_{\min}| \leq \gamma$.
Combining this bound with the factor of $(1-1/e)$ we lost due to
approximating \WCOBJ with \WCOBJG, we obtain the claim of the
theorem.
\end{proof}

Theorem~\ref{THM:ALG} holds very broadly, so long as all
influence functions are monotone and submodular. 
This includes the DIC, DLT, and CIC models, and allows mixing
  influence functions from different model classes.

Notice the contrast between Theorems~\ref{THM:ALG} and~\ref{THM:HARDNESS}.
By allowing the seed set size to be exceeded just a little more
(a factor $\ln |\IMFSET| + O(1)$ instead of $0.999 \ln |\IMFSET|$),
we go from $\Omega(n^{1-\epsilon})$ approximation hardness to
a $(1-1/e)$-approximation algorithm.

\subsection{Simple Heuristics}
\label{sec:simple-heuristics}
In addition to the \SATMINTSS algorithm, our experiments use two
natural baselines. The first is a simple greedy algorithm \GREEDY
which adds $\beta k$ elements to $S$ one by one, always choosing the one
maximizing $\WCObjG{S \cup \{v\}}$. 
While this heuristic has provable guarantees when the objective
function is submodular, this is not the case for the minimum of
submodular functions. 
%However, greedy algorithms are frequently chosen by practitioners and
%applied researchers, even when they are provably suboptimal.

The second heuristic is to run a greedy algorithm for each objective
function $\IMFUNC \in \IMFSET$ separately, and choose the best of the
resulting solutions. 
Those solutions are exactly the sets \GreS{\IMFUNC} defined earlier in
this section. 
Thus, the algorithm consists of choosing 
$\argmax_{\IMFUNC \in \IMFSET} \WCObjG{\GreS{\IMFUNC}}$.
We call the resulting algorithm \ALLGREEDY.

In the worst case, both \GREEDY and \ALLGREEDY can perform arbitrarily
badly, as seen by the following class of examples with a given
parameter $k$.
The example consists of $k$ instances of the DIC model for the
following graph with $3k+m$ nodes (where $m \gg k$).
The graph comprises a directed complete bipartite graph $K_{k,m}$
with $k$ nodes $x_1, \ldots, x_k$ on one side and $m$ nodes 
$y_1, \ldots, y_m$ on the other side, as well as
$k$ separate edges $(u_1, v_1), \ldots, (u_k, v_k)$. 
The edges $(u_i, v_i)$ have activation probability 1 in all instances.
In the bipartite graph, in the \Kth{i} scenario, only the edges
leaving node $x_i$ have probability 1, while all others have 0
activation probability.

The optimal solution for \RobInfMax is to select all nodes $x_i$,
since one of them will succeed in activating the $m$ nodes $y_j$.
The resulting objective value will be close to 1.
However, \ALLGREEDY only picks one node $x_i$ and the remaining
$k-1$ nodes as $u_j$. \GREEDY instead picks all of the $u_j$.
Thus, both \ALLGREEDY and \GREEDY will have robust influence close to
$0$ as $m$ grows large.
Empirical experiments confirm this analysis. 
For example, for $k=2$ and $m=100$, \SATMINTSS achieves
$\WCOBJ = 0.985$, while \GREEDY and \ALLGREEDY only achieve $0.038$
and $0.029$, respectively.

\subsubsection*{Implementation}
The most time-consuming step in all of the algorithms is the
estimation of influence coverage, given a seed set $S$.
Na\"{\i}ve estimation by Monte Carlo simulation could lead to
a very inefficient implementation. 
The problem is even more pronounced compared to traditional
\InfMax as we must estimate the influence in multiple diffusion settings. 
Instead, we use the \CONTINEST algorithm of 
Du et al.~\cite{DuSonGomZha13} for fast influence estimation under the CIC 
model. 
For the DIC model, we generalize the approach of Du et al.
To accelerate the \MINTSS algorithm, we also apply the CELF
optimization~\cite{LKGFVG} in all cases. 
Analytically, one can derive linear running time (in both $n$
  and $|\IMFSET|$) for all three algorithms, thanks to the fast influence estimation.
This is borne out by detailed experiments in Section~\ref{sec:scalability}.

\section{Experiments} \label{sec:experiment}
We empirically evaluate the \SATMINTSS algorithm and
the \GREEDY and \ALLGREEDY heuristics.
%\GREEDY represents a popular class of heuristics that does not fully
%recognize the properties of the objective function, 
%while \ALLGREEDY represents algorithms that do not consider the
%diffusion scenarios \emph{jointly}.
Our goal is twofold:
(1) Evaluate how well \SATMINTSS and the heuristics perform on realistic instances.
(2) Qualitatively understand the difference between robustly and
non-robustly optimized solutions.
%Our goal is threefold:
%(1) Understand how well \SATMINTSS performs in realistic instances.
%(2) Understand theoretically and practically how much worse simple
%heuristics perform.
%(3) Qualitatively understand the difference between robustly and
%non-robustly optimized solutions.

Our experiments are all performed on real-world data sets.
The exception is the scalability experiments 
in Section~\ref{sec:scalability}, which benefit from
the controlled environment of synthetic networks.
The data sets span the range of different causes for uncertainty, namely:
(1) influences are learned from cascades for different topics;
(2) influences are learned with different modeling assumptions;
(3) influences are only inferred to lie within intervals \ID{e} 
(the \IntPer model).

\subsection{Different Networks}
We first focus on the case in which the diffusion model is kept
constant: we use the DIC model, with parameters specified below.
Different objective functions are obtained from observing cascades
(1) on different topics. 
We use Twitter retweet networks for different topics.
(2) at different times. 
We use MemeTracker diffusion network snapshots at different times.

The Twitter networks are extracted from a complete collection of tweets
between Jan.~2010 and Feb.~2010.
We treat each hashtag as a separate cascade, and extract the top
100/250 users with the most tweets containing these hashtags into two
datasets (Twitter100 and Twitter250).
The hashtags are manually grouped into five categories of about 70--80
hashtags each, corresponding to major events/topics during the data
collection period.
The five groups are: Haiti earthquake (Haiti), Iran election (Iran),
Technology, US politics, and the Copenhagen climate change summit
(Climate). 
Examples of hashtags in each group are shown in Table~\ref{Tab:hashtags}. 
Whenever user $B$ retweets a post of user $A$ with a hashtag belonging
to category $i$, we insert an edge with activation probability 1 from
$A$ to $B$ in graph $i$. 
The union of all these edges specifies the \Kth{i} influence function.

Our decision to treat each hashtag as a separate cascade is
supposed to capture that most hashtags ``spread'' across Twitter
when one user sees another use it, and starts posting with it
himself. The grouping of similar hashtags captures that a user who may
influence another to use the hashtag, say, \#teaparty, would likely
also influence the other user to a similar extent to use, say,
\#liberty. The pruning of the data sets was necessary because most
users had showed very limited activity. Naturally, if our goal were to
evaluate the algorithmic efficiency rather than the performance with
respect to the objective function, we would focus on larger networks,
even if the networks were less easily visualized.

\begin{table}[htb]
\centering
\begin{tabular}{|l|l|}
    \hline
  Category & Hashtags\\   \hline \hline
  Iran & \#iranelection, \#iran, \#16azar, \#tehran \\ \hline
  Haiti & \#haiti, \#haitiquake, \#supphaiti, \#cchaiti \\ \hline
  Technology & \#iphone, \#mac, \#microsoft, \#tech \\ \hline
  US politics & \#obama, \#conservative, \#teaparty, \#liberty \\ \hline
  Climate & \#copenhagen, \#cop15, \#climatechange \\ \hline
\end{tabular}
\caption{Examples of hashtags in each category \label{Tab:hashtags}}
\end{table}

The MemeTracker dataset~\cite{leskovec2009meme} contains memes
extracted from the Blogsphere and main-stream media sites between
Aug.~2009 and Feb.~2010. 
In our experiments, we extract the 2000/5000 sites
with the most posting activity across the time period we study
(Meme2000 and Meme5000). 
We extract six separate diffusion networks, one for each month.
The network for month $i$ contains all the directed links that were
posted in month $i$ (in reverse order, i.e., if $B$ links to
  $A$, then we add a link from $A$ to $B$), with activation probability 1.
It thus defines the \Kth{i} influence function.

The parameters of the DIC model used for this set of experiments are
summarized in Table~\ref{Tab:setting1}. 

\begin{table}[h]
\centering
\begin{tabular}{|l|l|l|}
    \hline
  Data set & Edge Activation Probability & \# Seeds\\
  \hline
  \hline
  Twitter100 & 0.2 & 10\\ \hline
  Twitter250 & 0.1 & 20  \\ \hline
  Meme2000 & 0.05 & 50  \\ \hline
  Meme5000 & 0.05 & 100 \\ \hline
\end{tabular}
\caption{Diffusion model settings \label{Tab:setting1}}
\end{table}

Recalling that in the worst case, a relaxation in the number of seeds
is required to obtain robust seed sets, we allow all algorithms to
select more seeds than the solution they are compared against.
Specifically, we report results in which the algorithms may select 
$k$, $1.5 \cdot k$ and $2 \cdot k$ seeds, respectively.
The reported results are averaged over three independent runs of each
of the algorithms.

\subsubsection*{Results: Performance}
The aggregate performance of the different algorithms on the four data
sets is shown in Figure~\ref{Fig:performance-topics}.

\begin{figure*}[htb]
\setlength\fboxsep{0pt}
\begin{center}
\subfigure[Twitter100 ($k=10$)]{
\includegraphics[width=0.23\textwidth]{./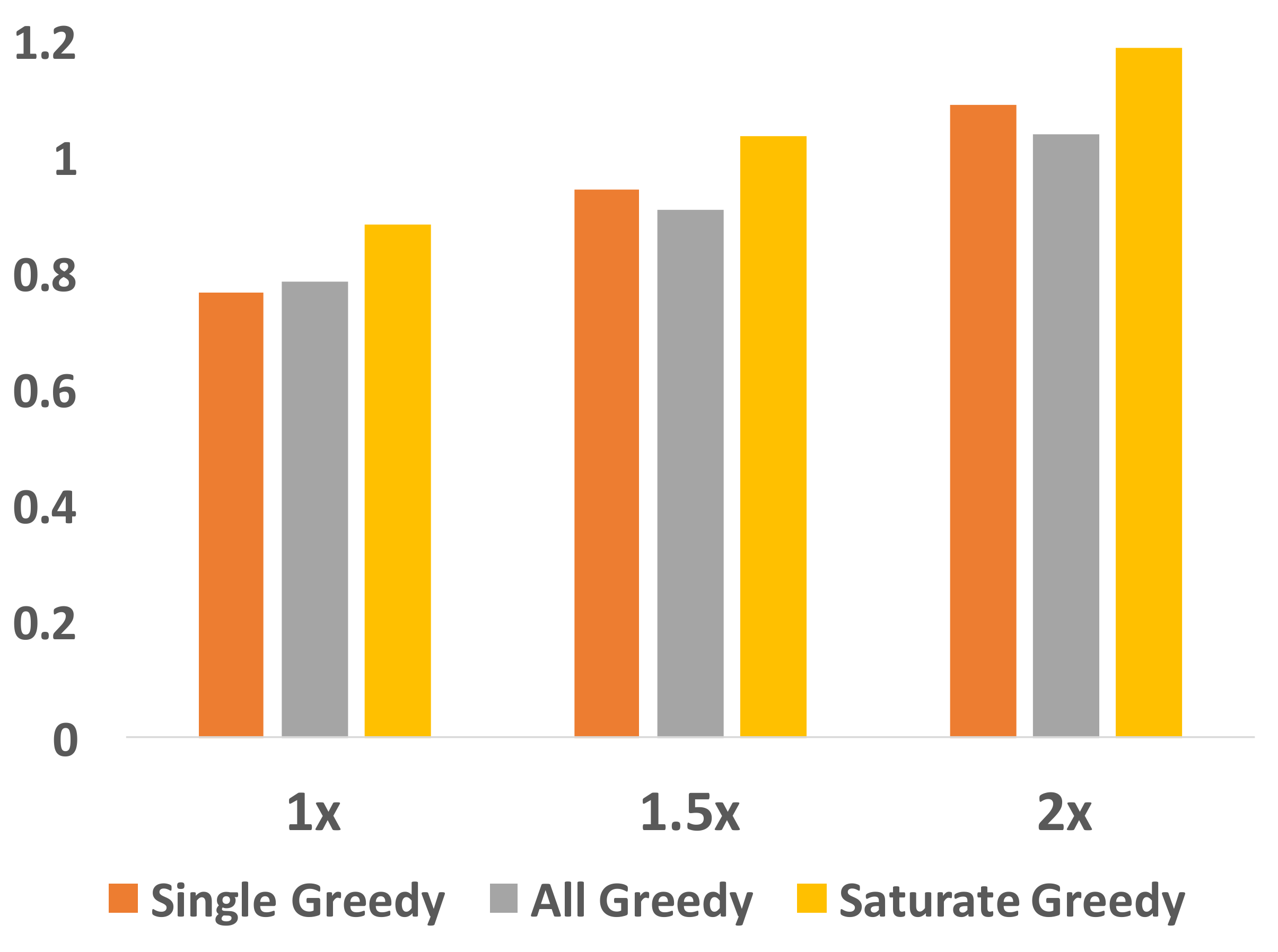}}
\subfigure[Twitter250 ($k=20$)]{
\includegraphics[width=0.23\textwidth]{./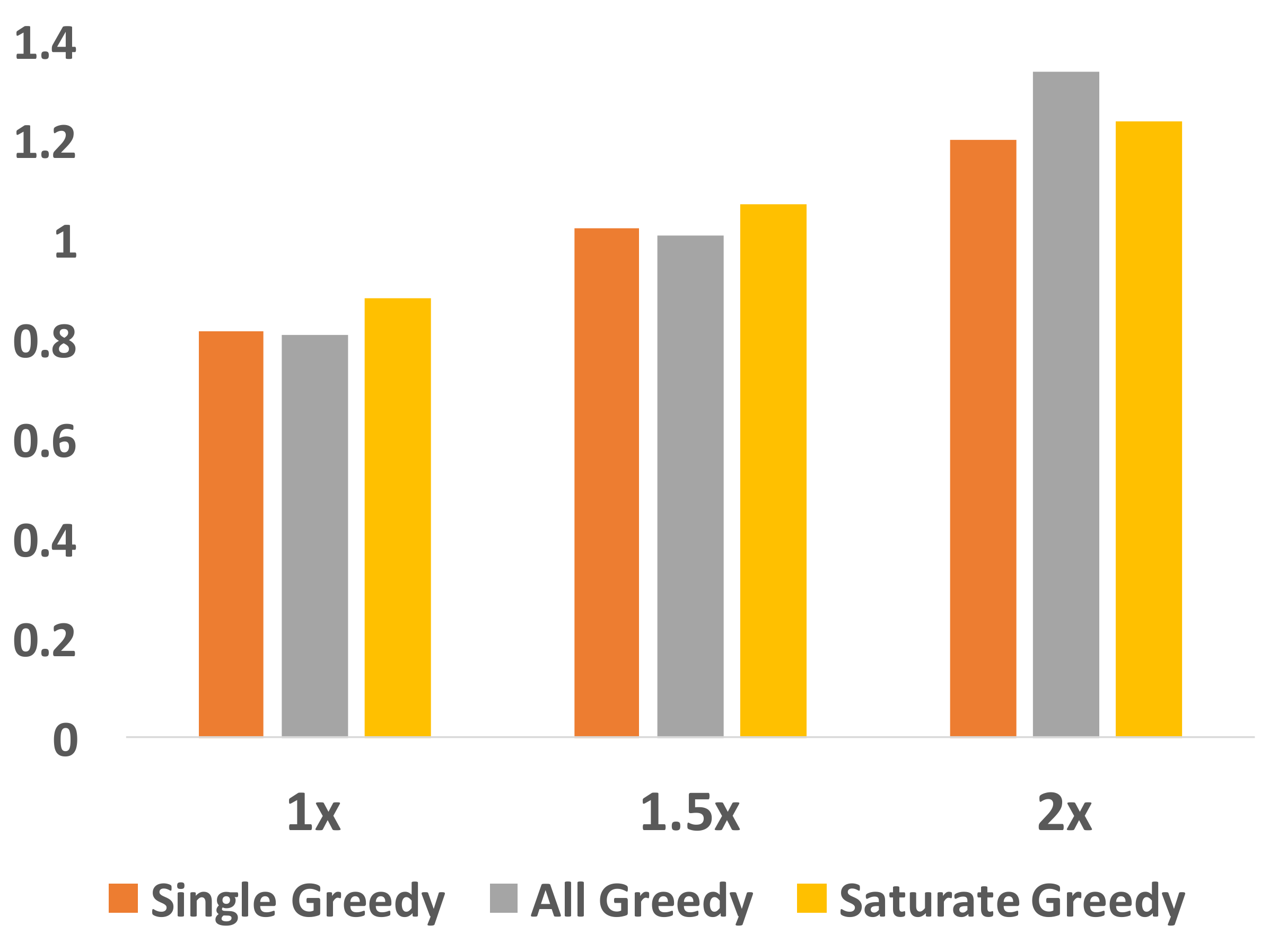}}
\subfigure[Meme2000 ($k=50$)]{
\includegraphics[width=0.23\textwidth]{./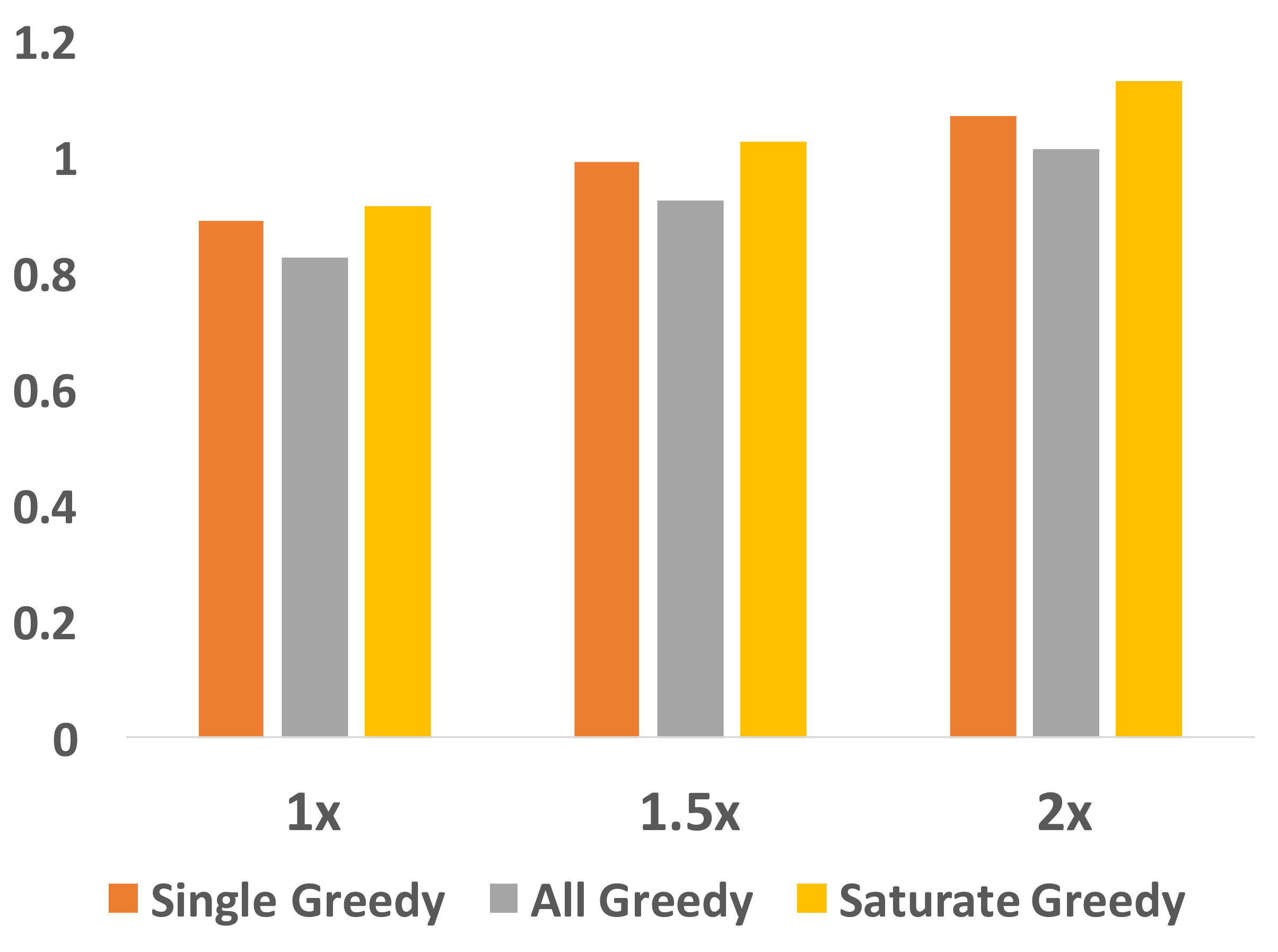}}
\subfigure[Meme5000 ($k=100$)]{
\includegraphics[width=0.23\textwidth]{./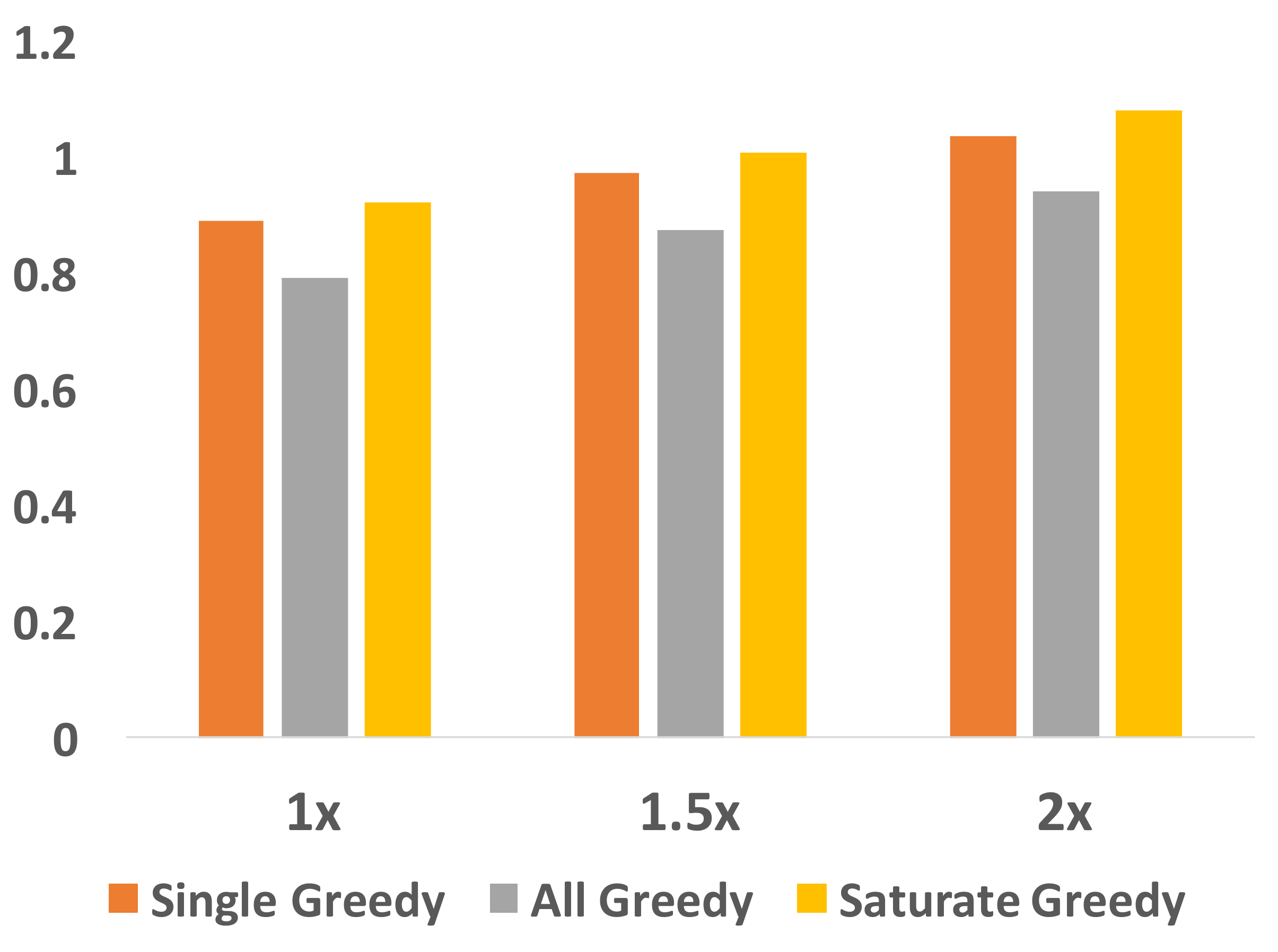}}
\end{center}
\caption{Performance of the algorithms on the four topical/temporal datasets. 
The $x$-axis is the number of seeds selected, and the $y$-axis
  the resulting robust influence (compared to seed sets of size
  $k$). \label{Fig:performance-topics}}
\end{figure*}

The first main insight is that (in the instances we study)
getting to over-select seeds by 50\%, all three algorithms
achieve a robust influence of at least 1.0. 
In other words, 50\% more seeds let the algorithms perform as though
they knew exactly which of the (adversarially chosen) diffusion
settings was the true one.
This suggests that the networks in our data sets share a lot of
similarities that make influential nodes in one network also (mostly)
influential in the other networks.
This interpretation is consistent with the observation that 
the baseline heuristics perform similarly to (and in one case better
than) the \SATMINTSS algorithm.
Notice, however, that when selecting just $k$ seeds, \SATMINTSS does
perform best (though only by a small margin) among the three
algorithms. 
This suggests that keeping robustness in mind may be more crucial
when the algorithm does not get to compensate with a larger number of
seeds.

\subsubsection*{Results: Visualization} 

To further illustrate the tradeoffs between robust and non-robust
optimization, we visualize the seeds selected by \SATMINTSS (robust
seeds) compared to seeds selected non-robustly based on only one
diffusion setting. 
For legibility, we focus only on the Twitter250 data set,
and only plot $4$ out of the $5$ networks. 
(The fifth network is very sparse, and thus not particularly interesting.)

Figure~\ref{Fig:vis_iran} compares the seeds selected by \SATMINTSS 
with those (approximately) maximizing the influence for the Iran
network. 
Notice that \SATMINTSS focuses mostly (though not exclusively)
on the densely connected core of the network (at the center), 
while the Iran-specific optimization also exploits the dense regions
on the left and at the bottom. These regions are much less densely connected
in the US politics and Climate networks, while the core remains fairly
densely connected, leading the \SATMINTSS solution to be somewhat more robust. 

\begin{figure*}[htb]
\setlength\fboxsep{0pt}
\begin{center}
\subfigure[Iran]{
\includegraphics[width=0.23\textwidth]{./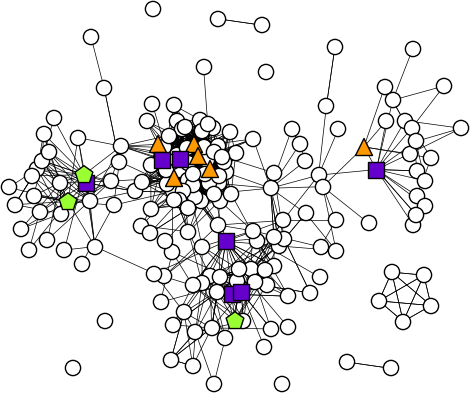}}
\subfigure[Haiti]{
\includegraphics[width=0.23\textwidth]{./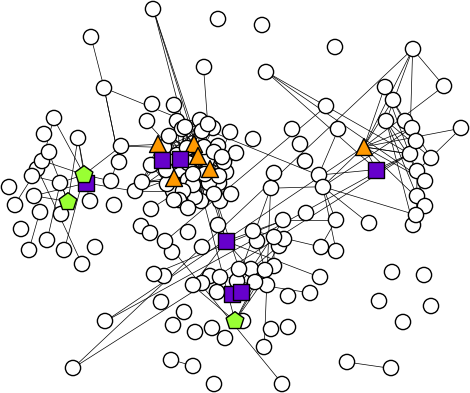}}
\subfigure[US politics]{
\includegraphics[width=0.23\textwidth]{./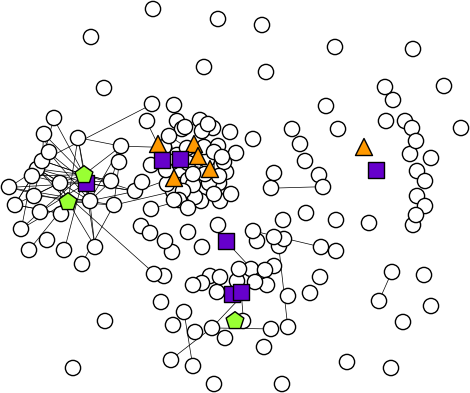}}
\subfigure[Climate]{
\includegraphics[width=0.23\textwidth]{./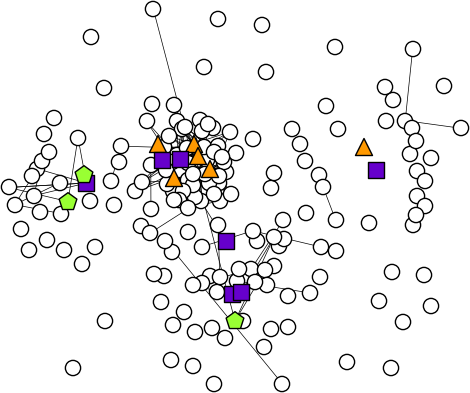}}
\end{center}
\caption{\SATMINTSS vs.~Iran graph seed nodes. Green/pentagon nodes are
  selected in both; orange/triangle nodes are selected by \SATMINTSS only; 
  purple/square nodes for Iran only.\label{Fig:vis_iran}}
\end{figure*}

Similarly, Figure~\ref{Fig:vis_climate} compares the \SATMINTSS
seeds (which are the same as in Figure~\ref{Fig:vis_iran}) 
with seeds for the Climate network. 
The trend here is exactly the opposite. 
The seeds selected based only on the Climate network are exclusively
in the core, because the other parts of the Climate network are
barely connected. On the other hand, the robust solution picks a few seeds
from the clusters at the bottom, left, and right, which are present
in other networks. These seeds lead to extra influence in those
networks, and thus more robustness. 

\begin{figure*}[htb]
\setlength\fboxsep{0pt}
\begin{center}
\subfigure[Iran]{
\includegraphics[width=0.23\textwidth]{./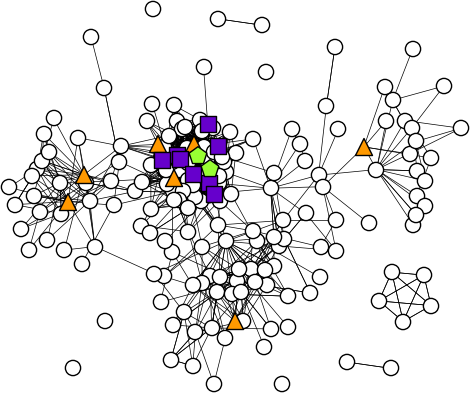}}
\subfigure[Haiti]{
\includegraphics[width=0.23\textwidth]{./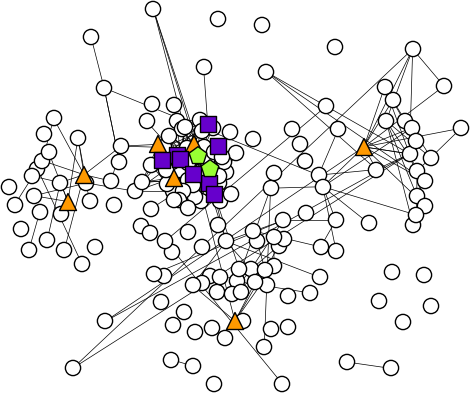}}
\subfigure[US politics]{
\includegraphics[width=0.23\textwidth]{./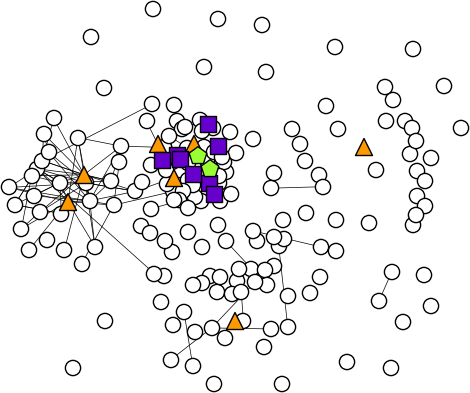}}
\subfigure[Climate]{
\includegraphics[width=0.23\textwidth]{./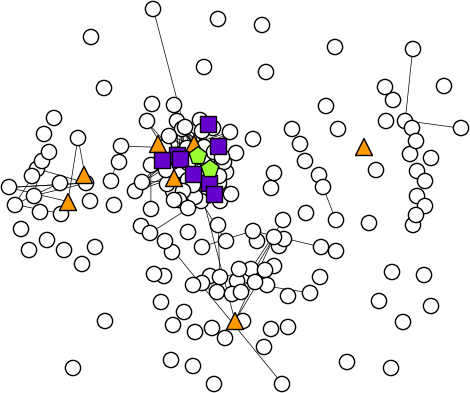}}
\end{center}
\caption{\SATMINTSS vs.~Climate graph seed nodes. Green/pentagon nodes are
  selected in both; orange/triangle nodes are selected by \SATMINTSS only; 
  purple/square nodes for Climate only.\label{Fig:vis_climate}}
\end{figure*}

\subsection{Different Diffusion Models}
%It is often difficult to decide which diffusion models to use for the
%\InfMax problem. 
In choosing a diffusion model, there is little convincing empirical
work guiding the choice of a model class (such as CIC, DIC, or
threshold models) or of distributional assumptions for model
parameters (such as edge delay).
A possible solution is to optimize robustly with respect to these
different possible choices. 

In this section, we evaluate such an approach.
Specifically, we perform two experiments:
(1) 
learning the CIC influence network under different parametric
assumptions about the delay distribution, and
(2) learning the influence network under different models of influence
(CIC, DIC, DLT).
We again use the MemeTracker dataset, restricting ourselves to the
data from August 2008 and the 500 most active users. 
We use the \MULTITREE algorithm of Gomez-Rodriguez et
al.~\cite{gomez-rodriguez:schoelkopf:multiple} to infer the diffusion
network from the observed cascades.
This algorithm requires a parametric assumption for the edge delay
distribution. 
We infer ten different networks $G_i$ corresponding to the 
Exponential distribution with parameters 0.05, 0.1, 0.2, 0.5, 1.0,
and to the Rayleigh distribution with parameters 0.5, 1, 2, 3, 4. 
The length of the observation window is set to 1.0.

We then use the three algorithms to perform robust influence
maximization for $k=10$ seeds, again allowing the algorithms to exceed
the target number of vertices. 
The influence model for each graph is the CIC model with the same
parameters that were used to infer the graphs.

The performance of the algorithms is shown in
Figure~\ref{Fig:diff_models}(a). 
All methods achieve satisfactory results in the experiment;
this is again due to high similarity between the different diffusion
settings inferred with different parameters. 

For the second experiment, we investigate
the robustness across different classes of diffusion models. 
We construct three instances of the DIC, DLT and DIC model from the 
ground truth diffusion network between the 500 active users. 
For the DIC model, we set the activation probability uniformly to $0.1$.
For the DLT model, we follow~\cite{InfluenceSpread} and set the
edge weights to $1/d_v$ where $d_v$ is the in-degree of node $v$. 
For the CIC model, we use an exponential distribution with parameter $0.1$ 
and an observation window of length $1.0$.
We perform robust influence maximization for $k=10$ seeds and again
allow the algorithms to exceed the target number of seeds.
 
The results are shown in Figure~\ref{Fig:diff_models}(b). 
Similarly to the case of different estimated parameters, 
all methods achieve satisfactory results in the experiment
due to the high similarity between the diffusion models.
Our results raise the intriguing question of which types of
networks would be prone to significant differences in algorithmic
performance based on which model is used for network estimation.
%\begin{figure}[htb]
%\center
%\includegraphics[width=0.4\textwidth]{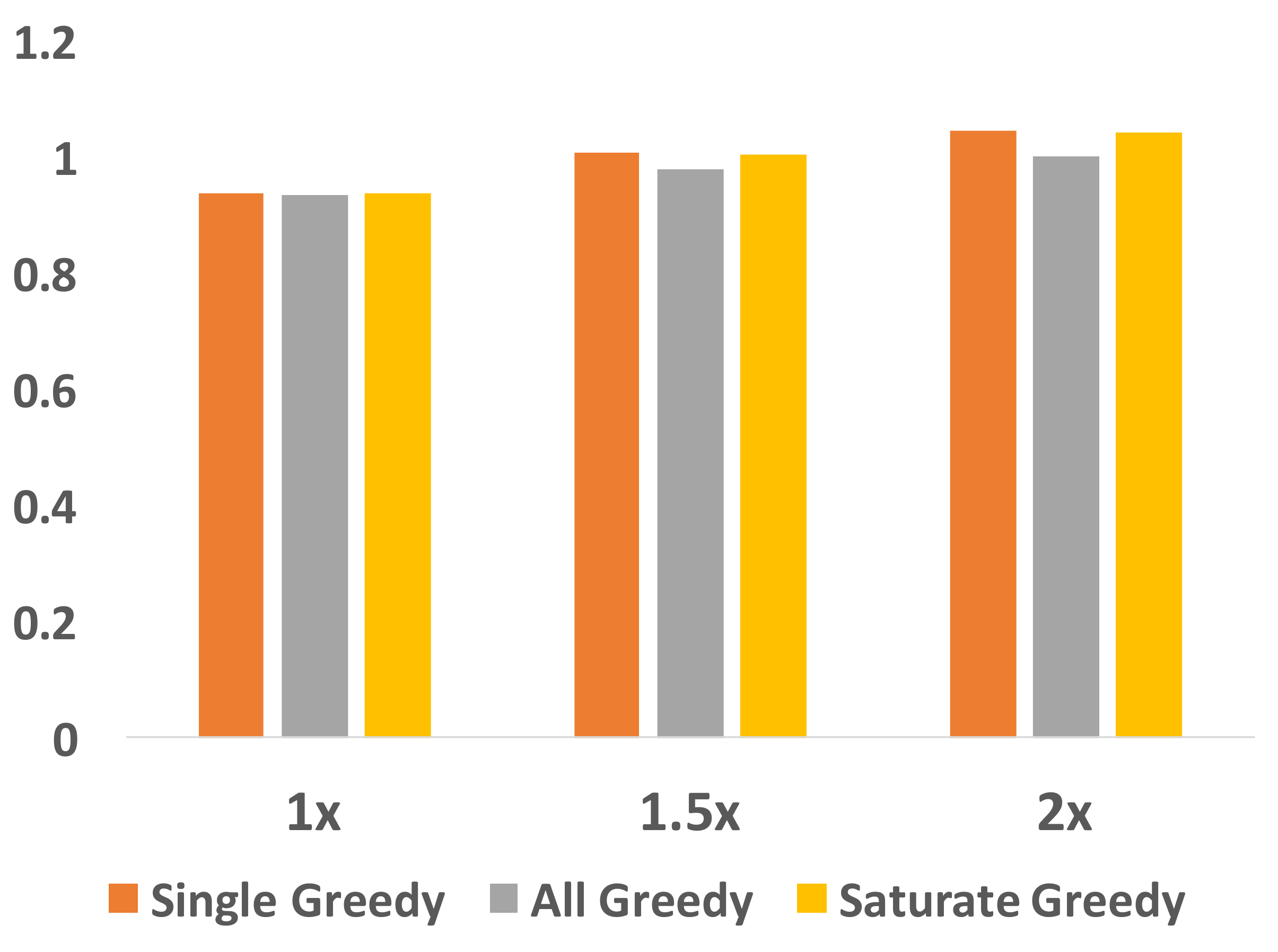}
%\caption{Performance of \RobInfMax algorithms under different
%  diffusion models ($k=10$)\label{Fig:performance-model}}
%\end{figure}
\subsection{Networks sampled from the \IntPer model}
\label{sec:experiments:intper}
To investigate the performance when model parameters can only be
placed inside ``confidence intervals'' (i.e., the \IntPer model), we carry out
experiments under two networks, MemeTracker and STOCFOCS.

The MemeTracker network is extracted from the MemeTracker data set
using the \textsc{ConNIe} algorithm~\cite{myers:leskovec:convexity} 
to infer the (fractional) parameters (activation probabilities) of a
DIC model from the same
500-node MemeTracker data set used in the previous section.
We also ran experiments on a multigraph extracted from co-authorship
of published papers in the conferences STOC and FOCS from 1964--2001. 
Each node in that network is a researcher with at least one
publication in one of the conferences. 
For each multi-author paper, we add a complete undirected graph
among the authors. We compress parallel edges into a single edge with
weight $w_e$ and set the activation probability $\ActProbD{e} = 0.1\cdot w_e$. 
If $\ActProbD{e}>1$, we truncate its value to $1$.
%\footnote{%
%We also ran experiments on a multigraph extracted from
%co-authorship of published papers in the conferences STOC and FOCS
%from 1964--2001. 
%The results are very similar to those of the MemeTracker data set,
%and therefore omitted here.} 
%%data set. \dkcomment{The footnote may be a candidate for deletion.}
Following the approach of \cite{InfluenceStability}, for both
networks, we assign ``confidence intervals'' 
$\ID{e} = [(1-q) \ActProbD{e}, (1+q) \ActProbD{e}]$,
where the \ActProbD{e} are the inferred activation probabilities. For experiments on the MemeTracker network, we set $q \in \{10\%, 20\%, 30\%, \ldots, 100\%\}$, while we use a coarse grid for the experiments on the large graph STOCFOCS with
$q \in \{5\%, 10\%, 20\%, 50\%, 100\%\}$.

While Lemma~\ref{LEM:INTERVAL-ENDPOINTS} guarantees that the
worst-case instances have activation probabilities 
$(1-q) \ActProbD{e}$ or $(1+q) \ActProbD{e}$, 
this still leaves $2^{|E|}$ candidate functions, too many to include.
We generate an instance for our experiments by sampling 10 of
these functions uniformly, i.e., by independently making each edge's
activation probability either
$(1-q) \ActProbD{e}$ or $(1+q) \ActProbD{e}$.
This collection is augmented by two more instances: 
one where all edge probabilities are $(1-q) \ActProbD{e}$, 
and one where all probabilities are $(1+q) \ActProbD{e}$.
Notice that with the inclusion of these two instances,
the \ALLGREEDY heuristic generalizes the LUGreedy algorithm 
by Chen et al.~\cite{chen:lin:tan:zhao:zhou},
but might provide strictly better solutions on the
  \emph{selected} instances because it explicitly considers those
  additional instances.
The algorithms get to select 20 seed nodes; 
note that in these experiments, we are not considering a bicriteria
approximation. 
\begin{figure}[htb]
\setlength\fboxsep{0pt}
\begin{center}
\subfigure[Different Distributions.]{
\includegraphics[width=0.42\textwidth]{./figures/diff_model.pdf}}
\subfigure[Different Models]{
\includegraphics[width=0.42\textwidth]{./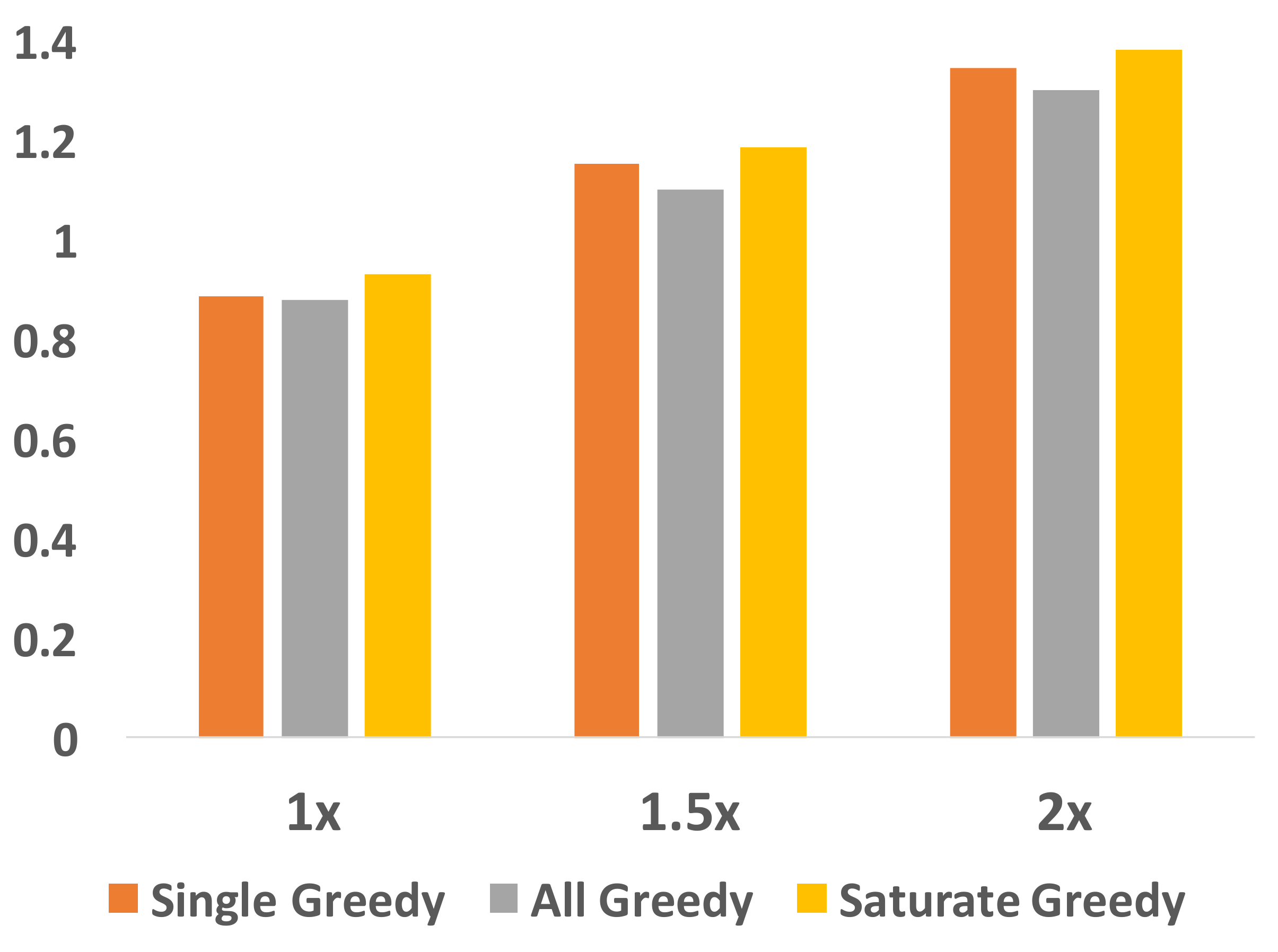}}
\end{center}
\caption{Performance of the algorithms 
(a) under different delay distributions following the CIC model, and 
(b) under different classes of diffusion models. 
The $x$-axis shows the number of seeds selected, and $k=10$.
\label{Fig:diff_models}}
\end{figure}

\begin{figure}[htb]
\begin{center}
%\subfigure[Different Models.]{
%\includegraphics[width=0.32\textwidth]{./figures/diff_model.pdf}}
\subfigure[\IntPer model: MemeTracker ($k=20$)]{
\includegraphics[width=0.45\textwidth]{./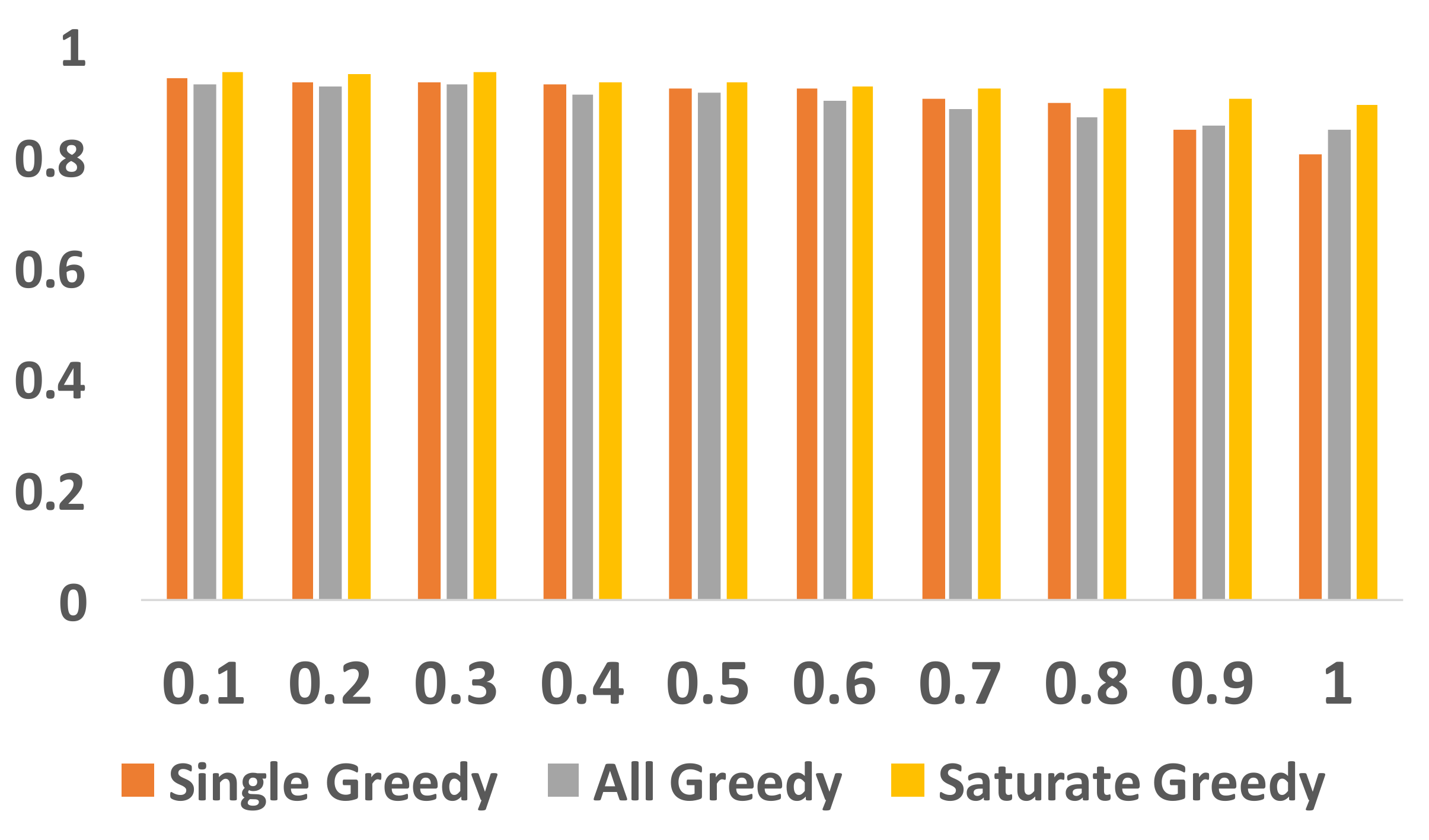}}
\subfigure[\IntPer model: STOCFICS ($k=50$)]{
\includegraphics[width=0.45\textwidth]{./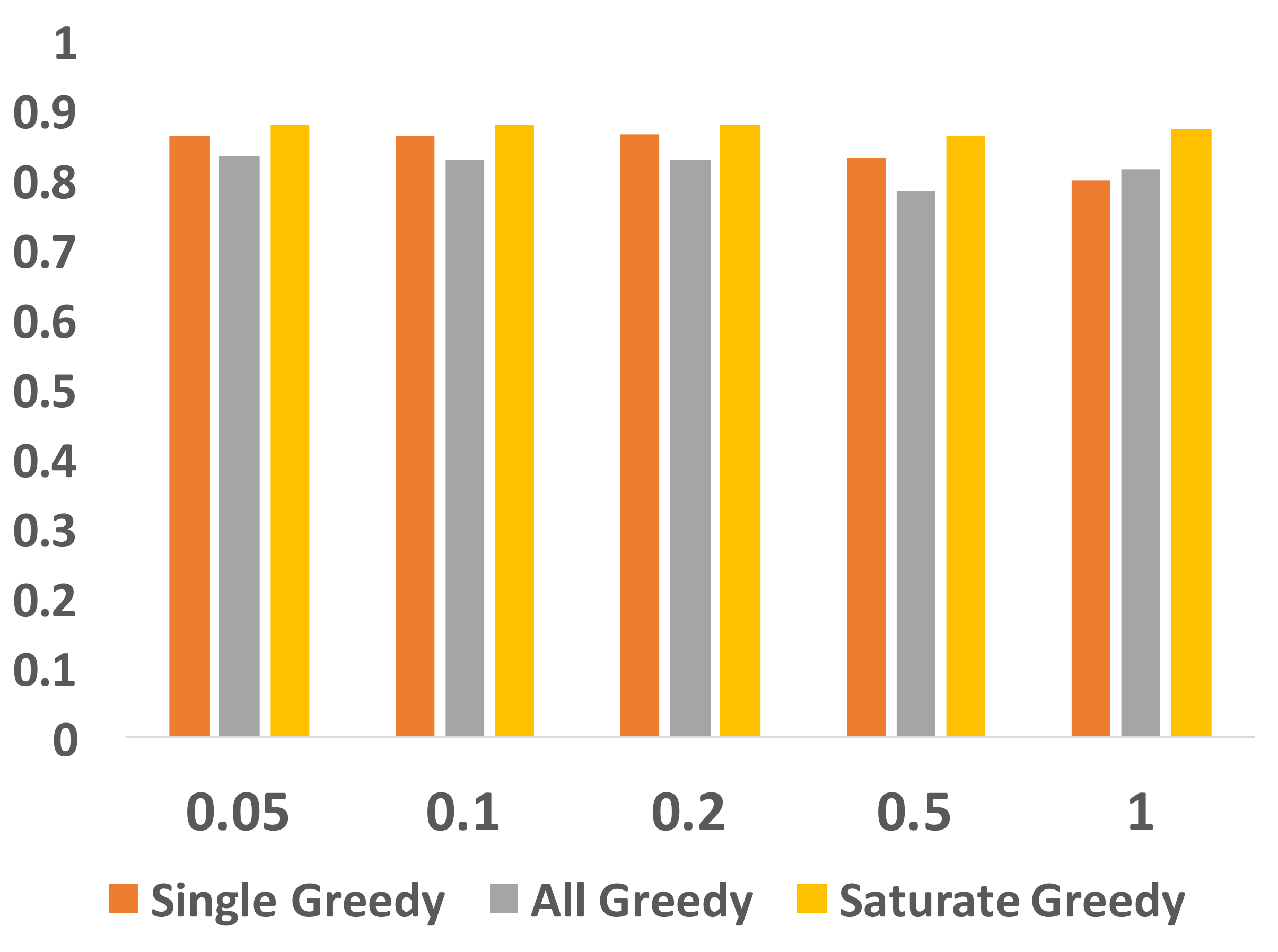}}
\end{center}
\caption{Performance of the algorithms 
%(a) under different diffusion models
%(the $x$-axis shows the number of seeds selected, and $k=10$), and 
under networks sampled from the \IntPer model: (a) MemeTracker networ; (b) STOCFOCS network.
(the $x$ axis shows the (relative) size of the perturbation interval \ID{e}).
\label{Fig:models_perturbation}}
\end{figure}

The results are shown in Figures~\ref{Fig:models_perturbation}(a)
and~\ref{Fig:models_perturbation}(b).
Contrary to the previous results,
when there is a lot of uncertainty about the edge parameters 
(relative interval size 100\% in both networks),
the \SATMINTSS algorithm more clearly outperforms the \GREEDY and
\ALLGREEDY heuristics.
Thus, robust optimization does appear to become necessary when there
is a lot of uncertainty about the model's parameters.

Notice that the evaluation of the algorithms' seed sets is performed
only with respect to the \emph{sampled} influence functions, not with
respect to all $2^{|E|}$ functions. 
Whether one can efficiently identify a worst-case parameter setting
for a given seed set \SeedS is an intriguing open question.
%, discussed briefly in Section~\ref{sec:conclusion}.
Absent this ability, we cannot efficiently guarantee that the solutions are
actually good with respect to all parameter settings.

\subsection{Scalability}\label{sec:scalability}
To evaluate the scalability of the algorithms, we depart from
real-world data sets in order to obtain a controlled environment.
We generate networks using the Kronecker graph model
\cite{Leskovec:Chakrabarti:Kleinberg:Faloutsos:Ghahramani:kronecker}
with either random, core-peripheral or hierarchical-community
structures.
%\footnote{%
%We only present results for networks with random structure. 
%The results for core-peripheral or hierarchical-community networks are
%very similar, and therefore omitted.}
For each type, we generate a set of $5, 10, 15, 20, 25$ networks of
sizes $128, 256, \ldots, 4096$. 
We use the DIC model with activation probability set to $0.1$,
and select $k=50$ nodes.
The running times of the three algorithms are shown in
Figure~\ref{Fig:runningtime} and
Figure~\ref{Fig:runningtime_graphnum}.
In Figure~\ref{Fig:runningtime}, we fix the number of 
networks to five and vary the size of each network;
in Figure~\ref{Fig:runningtime_graphnum}, we fix the size of the networks to
$1024$ and vary the number of networks.
The graphs show that the heuristics are faster than the \SATMINTSS
algorithm by about a factor of ten, but all three algorithms scale
linearly both in the size of the graph and the number of networks, 
due to the fast influence estimation method.
%
%\begin{figure}[htb]
%\setlength\fboxsep{0pt}
%\begin{center}
%\subfigure[Varying \#nodes]{
%\includegraphics[width=0.22\textwidth]{./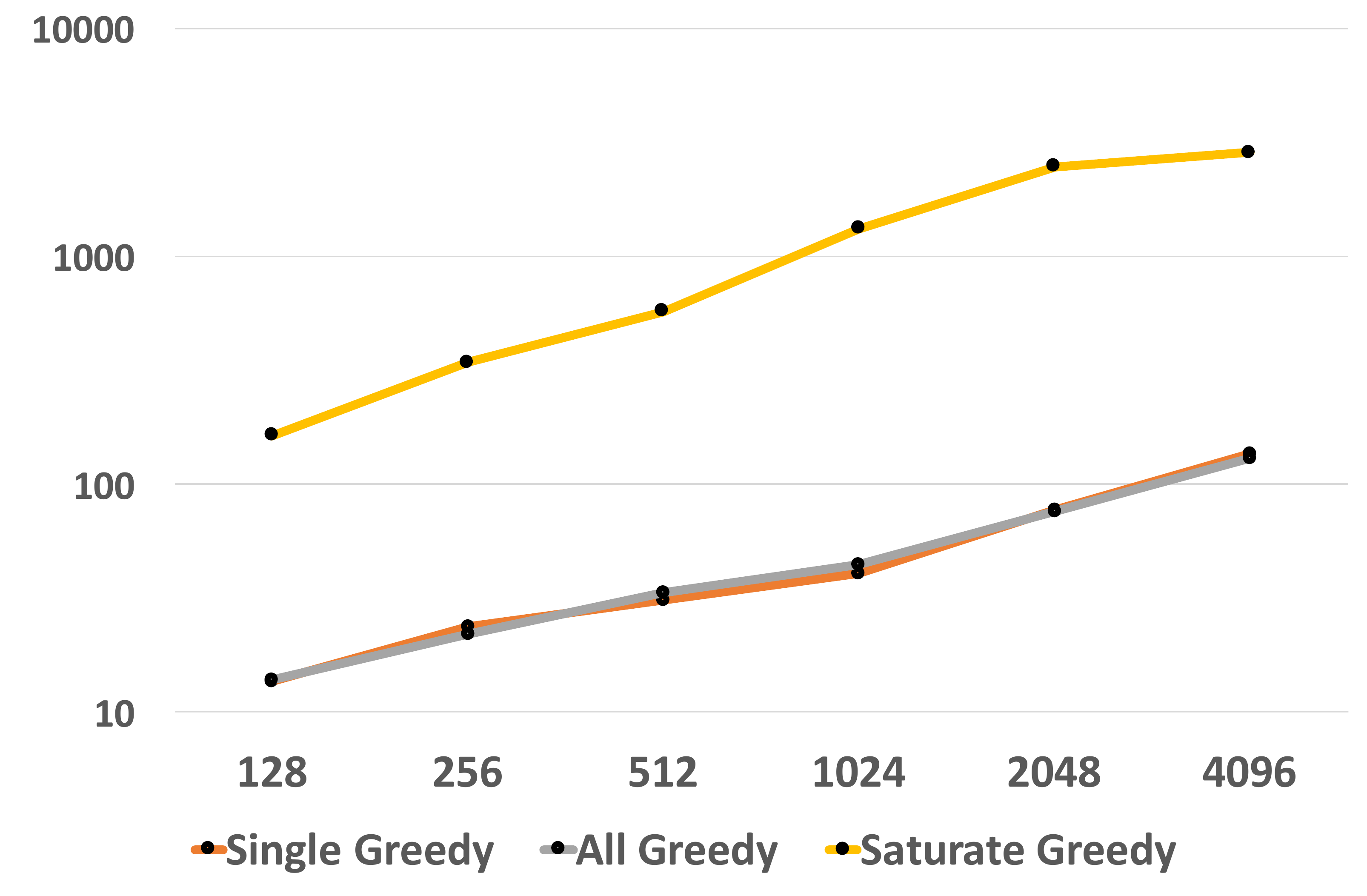}}
%\subfigure[Varying \#graphs]{
%\includegraphics[width=0.22\textwidth]{./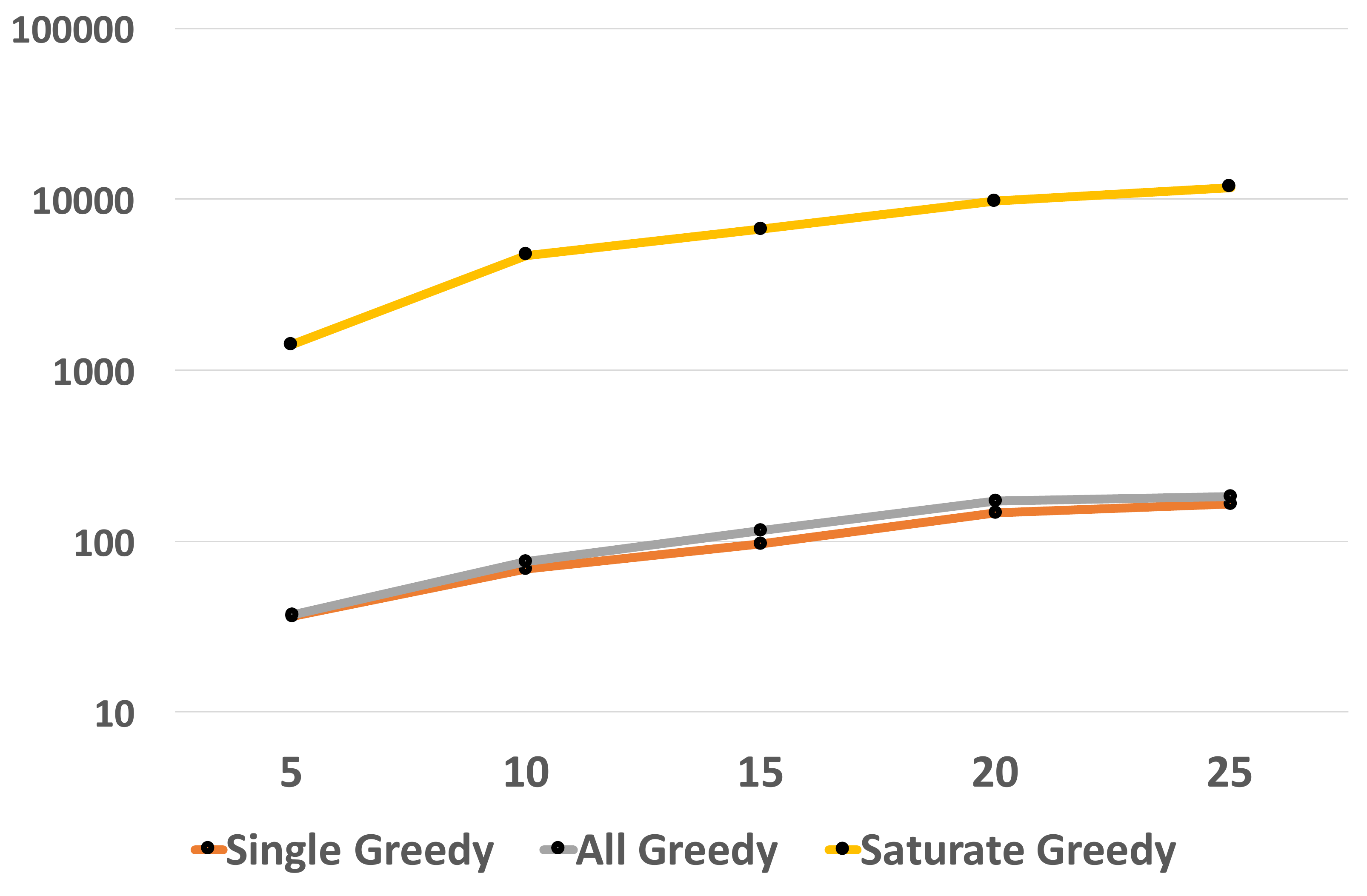}}
%\end{center}
%\vskip -0.2in
%\caption{Running times on Kronecker graph networks with random structures. $y$-axis is the running time in seconds, both plotted on a log scale. (a) The $x$ axis represents the number of nodes with number of graphs fixed to five. (b) The $x$ axis represents the number of graphs with number of nodes fixed to $1024$. 
%\label{Fig:runningtime}}
%\end{figure}
\begin{figure*}[htb]
\begin{center}
\subfigure[Random]{
\includegraphics[width=0.32\textwidth]{./figures/scalability_randdom.pdf}}
\subfigure[Core-peripheral]{
\includegraphics[width=0.32\textwidth]{./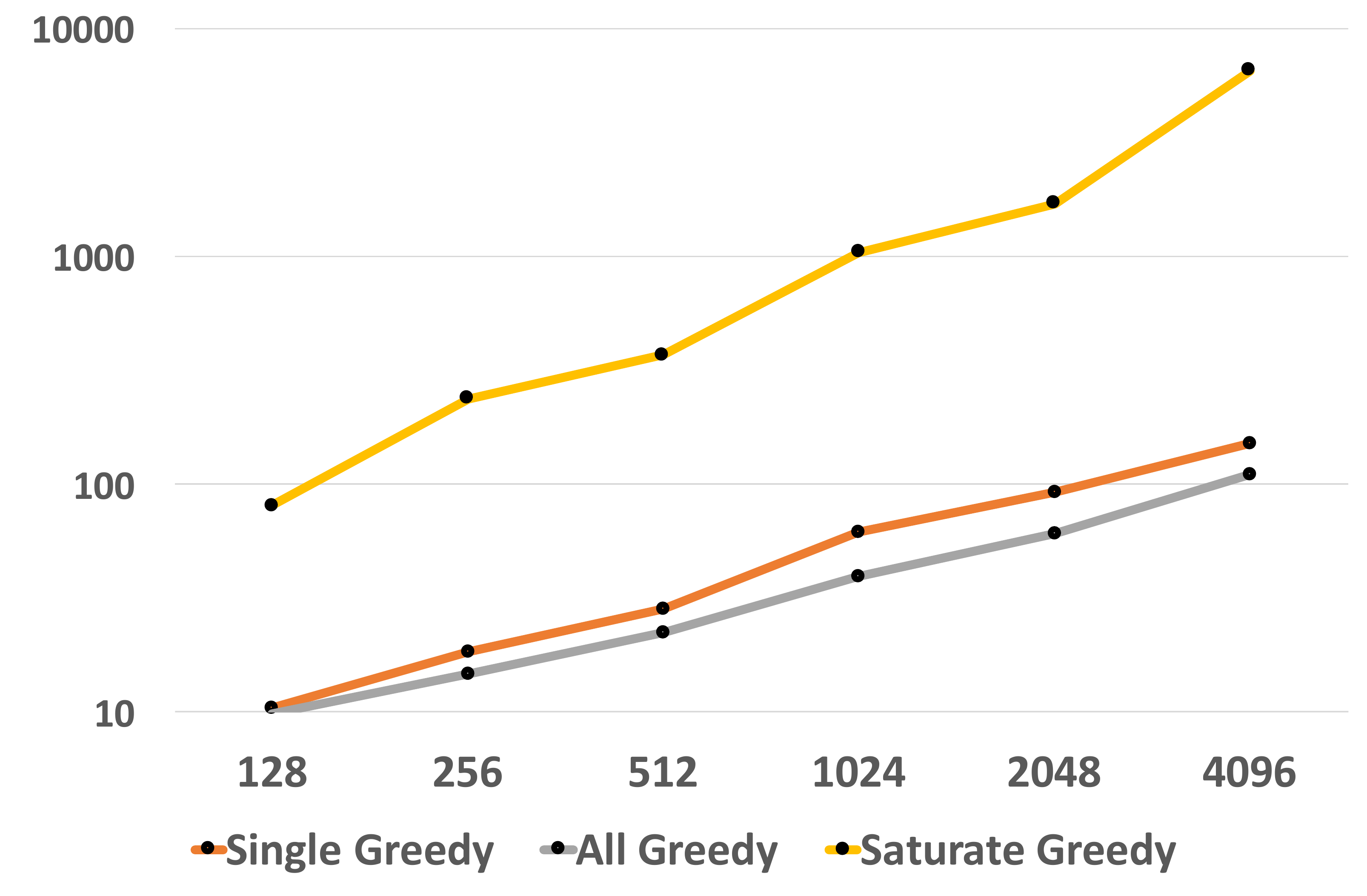}}
\subfigure[Hierarchical-community]{
\includegraphics[width=0.32\textwidth]{./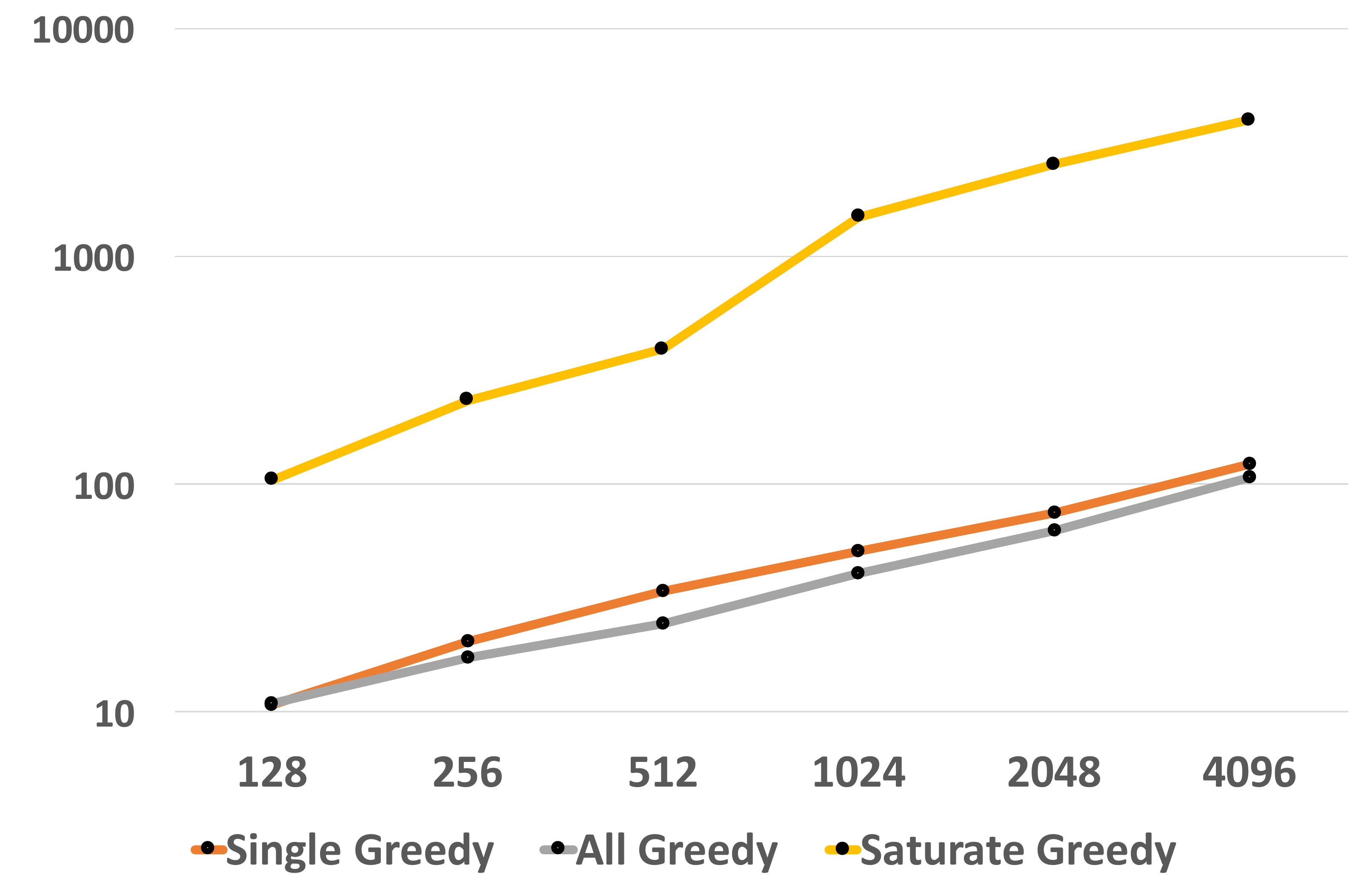}}
\end{center}
\caption{Running times on Kronecker graph networks with different
  structures. The $x$ axis represents the number of nodes, and the
  $y$-axis is the running time in seconds, both plotted on a log
  scale.}\label{Fig:runningtime} 
\end{figure*}
\begin{figure*}[htb]
\begin{center}
\subfigure[Random]{
\includegraphics[width=0.32\textwidth]{./figures/scalability_random_1024_graphs.pdf}}
\subfigure[Core-peripheral]{
\includegraphics[width=0.32\textwidth]{./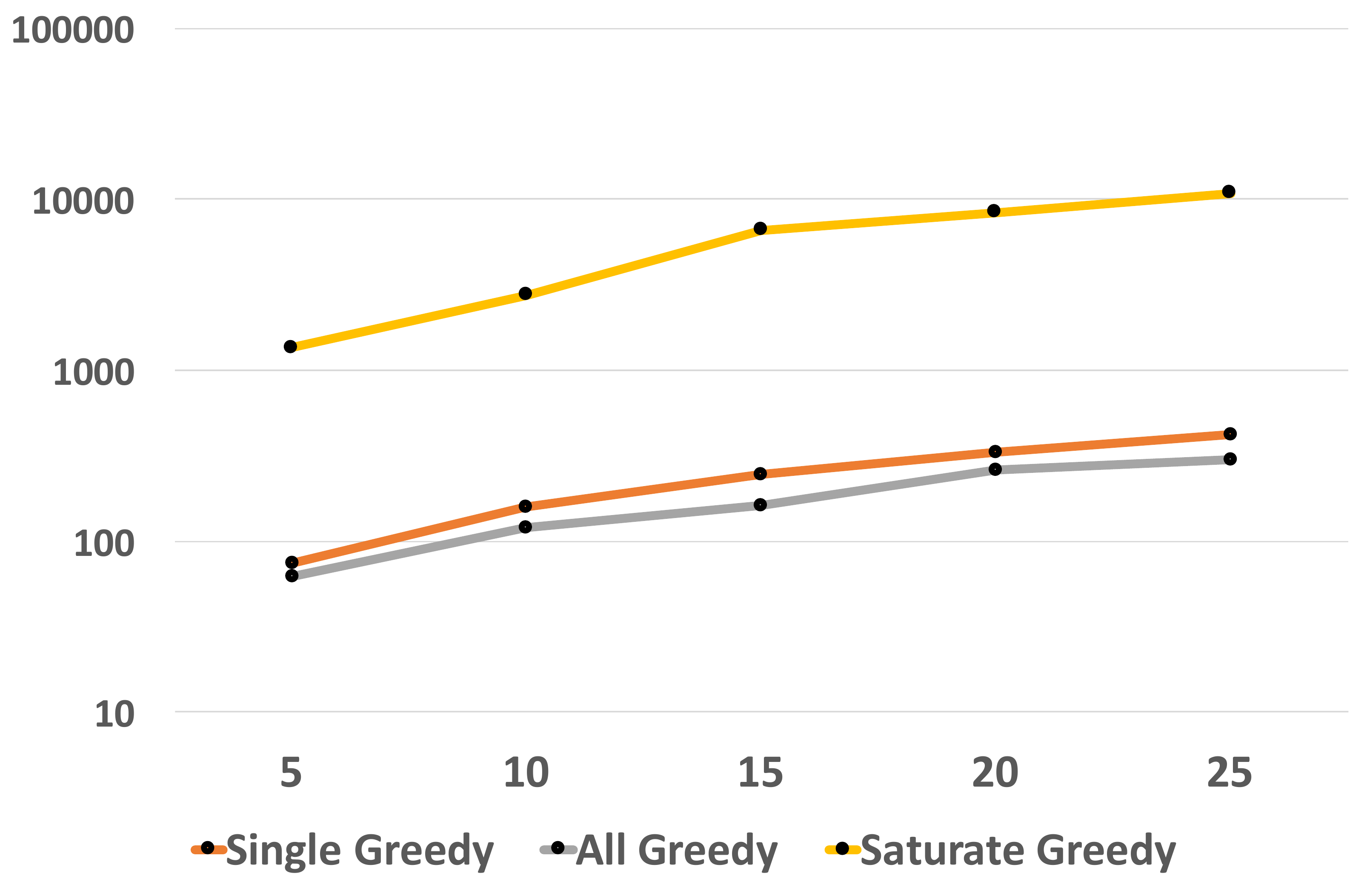}}
\subfigure[Hierarchical-community]{
\includegraphics[width=0.32\textwidth]{./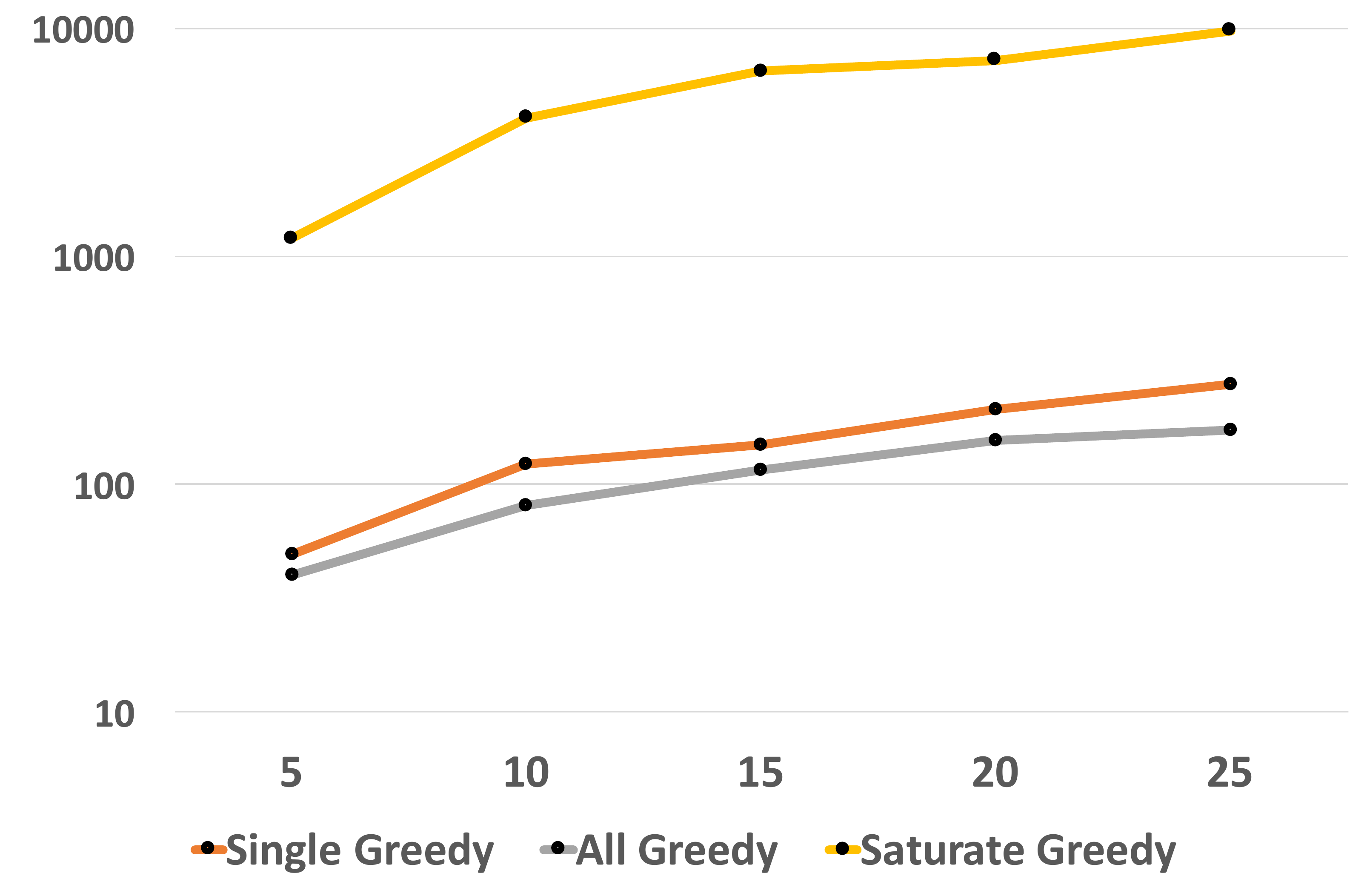}}
\end{center}
\caption{Running times on Kronecker graph networks with different
  structures. The $x$ axis represents the number of diffusion settings, and the
  $y$-axis is the running time in seconds, both plotted on a log
  scale.}\label{Fig:runningtime_graphnum} 
\end{figure*}

\section{Future Work} \label{sec:conclusion}
Our work marks an early step, rather than the conclusion, in devising 
robust algorithms for social network tasks, and more specifically \InfMax.
An interesting unresolved question is whether one can efficiently find
an (approximately) worst-case influence function in the \IntPer model.
This would allow us to empirically evaluate the performance of natural
heuristics for the \IntPer model, such as randomly sampling a small
number of influence functions.
Furthermore, it would allow us to design ``column generation'' style
algorithms for the \IntPer model, where we alternate between finding
a near-optimal seed set for all influence functions encountered so
far, and finding a worst-case influence function for the current seed
set, which will then be added to the encountered functions.
%While our hardness results (for both problems) rule out certain types
%of (bicriteria) approximation tradeoffs, other tradeoffs are
%conceivable, e.g., exceeding the target seed set size by a factor
%$\omega(\log n)$.

In the context of the bigger agenda, one could conceive of other
notions of robustness in \InfMax, perhaps tracing a finer line between
worst-case and Bayesian models.
Also, much more research is needed into identifying which influence
models best capture the behavior of real-world cascades, and under
what circumstances. It is quite likely that different models will
perform differently depending on the type of cascade and many other
factors, and in-depth evaluations of the models could give
practitioners more guidance on which mathematical models to choose.
While our model of robustness allows us to combine instances
  of different models (e.g., IC and LT), this may come at a cost of
  decreased performance for each of the models individually. 
Thus, it remains an important task to identify the influence models
that best fit real-world data.

\subsubsection*{Acknowledgments}
We would like to thank Shaddin Dughmi for useful pointers and
feedback, and Shishir Bharathi and Mahyar Salek for useful
discussions, and anonymous reviewers for useful feedback. 
The research was sponsored in part
by NSF research grant IIS-1254206
and by the U.S.~Defense Advanced Research Projects Agency 
(DARPA) under Social Media in Strategic Communication
(SMISC) program, Agreement Number W911NF-12-1-0034. 
The views and conclusions are those of the authors and should not be
interpreted as representing the official policies of the funding agency,
or the U.S. Government.

\bibliographystyle{plain}
%\bibliography{bibliography/names,bibliography/conferences,bibliography/publications,bibliography/bibliography,bibliography/additional}
\bibliography{robust_infmax.bbl}

\appendix

\section{Proof of Theorem~\ref{THM:HARDNESS}}
\label{sec:hardness}
\newcommand{\ALLSETS}{\ensuremath{\mathcal{T}}\xspace}

%\begin{extraproof}{Theorem~\ref{thm:hardness}}
We prove the two parts of the theorem
by (slightly different) reductions from the gap version of \SETCOVER.
A \SETCOVER instance consists of a universe
$U=\Set{a_1, \ldots, a_N}$, 
a collection $\mathcal{T}$ of $M$ subsets of $U$,
and an integer $k$.
A set cover is a collection $\mathcal{C} \subseteq \ALLSETS$ such
that $\bigcup_{T \in \mathcal{C}} T = U$.
Without loss of generality, we assume that each element is contained
in at least one set --- otherwise, there trivially is no set cover.
Also, without loss of generality, we assume that $k \leq \min(M,N)$, 
as otherwise, one can trivially pick all sets or one designated set
per element.

The gap version of \SETCOVER then asks us to decide whether
there is a set cover $\mathcal{C}$ of size 
$|\mathcal{C}| \leq k$ or whether each set cover has size at least
$(1-\delta) \ln N \cdot k$.
(The algorithm is promised that the minimum size will never lie
between these two values.)
Dinur and Steurer~\cite[Corollary 1.5]{Dinur:Steurer:setcover} showed
that the gap version of \SETCOVER is NP-hard.

\paragraph{Part 1}
Based on the \SETCOVER instance, we construct the following
instance of \RobInfMax under the DIC model.
Let $m := (\max(N,M))^{3/\epsilon}$.
The instance consists of $N$ bipartite graphs on a shared vertex set
$V = X \cup Y$. 
$X$ contains one node $x_T$ for each set $T \in \ALLSETS$;
$Y$ contains $m$ nodes $y_{a,1}, \ldots, y_{a,m}$ 
for each element $a \in U$.
Hence, the number of nodes in the constructed graph is 
$n = M+mN = \Theta(mN) \leq \Theta(m^{1+\epsilon/3})$; 
in particular, it is polynomial, and the reduction takes polynomial time.

In the \Kth{i} influence function, all nodes 
$x_T$ with $T \ni a_i$ have a directed edge with activation
probability 1 (or exponential delay distribution with delay parameter 1) to all of the $y_{a_i,j}$ (for all $j$); 
no other edges are present.
Hence, $|\IMFSET| = N$, and $\ln N = \ln |\IMFSET|$.
For the CIC model, the time window has size $T=NM$.

First, consider the case when there is a set cover $\mathcal{C}$ of
size $k$. 
Choose the corresponding $x_T, T \in \mathcal{C}$ as seed nodes,
and call the resulting seed set $S$. 
Because $\mathcal{C}$ is a set cover, in the \Kth{i} instance, 
%because $\mathcal{C}$ is a set cover, 
%there is at least one $T \ni a_i$ with $T \in \mathcal{C}$.
%Therefore, in the \Kth{i} instance, 
all of the $y_{a_i,j}$ are activated, for a total of at least $m+k$ nodes.
(Under the CIC model, all of these $y_{a_i,j}$ are activated
  with high probability, not deterministically, within the $T$ steps)
Because none of the nodes in $X$ and none of the 
$y_{a_{i'},j}, i' \neq i$ have incoming edges in the \Kth{i} instance,
the optimum solution for that instance can activate at most all of the
$m$ nodes $y_{a_i, j}$ and its $k$ selected nodes, for a total of $m+k$.
Thus, the objective function value will be 1 (or arbitrarily
  close to 1 w.h.p.~for the CIC model).

Now assume that there is no set cover of size 
$(1-\delta) \ln N \cdot k$, and consider any seed set $S$.
Let $k' = |S \cap X| \leq (1-\delta) \ln N \cdot k$ 
be the number of nodes from $X$ selected as seeds.
Because the set $\mathcal{S} := \Set[x_T \in S]{T \in \ALLSETS}$
cannot be a set cover by assumption, there must be some
$a_i \notin \bigcup_{T \in \mathcal{S}} T$.
Therefore, under the the \Kth{i} influence function, 
none of the $y_{a_i,j}$ can be ever activated, except those selected
directly in $S$. 
Hence, the number of nodes activated under the \Kth{i} influence
function is at most $|S| \leq (1-\delta) \ln N \cdot k$.
On the other hand, by selecting just one node $x_T$ corresponding to
any set $T \ni a_i$, one could have activated all of the
$y_{a_i,j}$ (with high probability under the CIC model), for a total of $m$.
Thus, the objective function value is at most
$\WCObj{S} \leq \frac{(1-\delta) \ln N \cdot k}{m} \leq O(m^{2\epsilon/3-1})
\leq O(n^{\frac{2\epsilon-3}{3+\epsilon}}) = o(n^{-(1-\epsilon)})$,
where we crudely bounded both $\ln N$ and $k$ by $N \leq O(m^{\epsilon/3})$.
%
%In summary, if the \SETCOVER gap instance has a small set cover, then
%there is a seed set $A$ with $\WCObj{A} \geq 1$, 
%while otherwise, for every seed set $A$, we have
%$\WCObj{A} \leq m^{2\epsilon/3-1}$.
%We re-express this upper bound on \WCObj{A} in terms of the number $n$ of
%nodes as $n^{\frac{2\epsilon-3}{3+\epsilon}} = o(n^{-(1-\epsilon)})$.

Hence, a $((1-\delta) \ln N, O(n^{1-\epsilon}))$ bicriteria
approximation algorithm could distinguish the two cases, 
and thus solve the gap version of \SETCOVER. 

\paragraph{Part 2}
For the second part, we just consider the gap version with a fixed
$\delta$, say, $\delta = \frac{1}{2}$.
Then, in the hard instances, $M$ and $N$ are polynomially related,
which we assume here, i.e., $M \leq N^q$ for some constant $q$ which
is independent of $\epsilon$ or $N$.

Based on the \SETCOVER instance, 
we construct a different \RobInfMax instance, 
consisting of a directed graph with three layers $V = X \cup Y \cup Z$. 
The first layer again contains one node $x_T$ for each set 
$T \in \ALLSETS$;
the second layer now contains just one node $y_a$ for each element $a \in U$.
There is an edge (with known influence probability 1,
or exponential delay distribution with parameter 1) 
from $x_T$ to $y_a$ if and only if $a \in T$.
The third layer $Z$ contains $m = (\max(N,M))^{2/\epsilon}$ nodes.
For each $a \in U$ and $z \in Z$, there is a directed edge
$(y_a, z)$ with complete uncertainty about its parameter: 
under the DIC model, the probability is in the interval
 $\ID{(y_a,z)} = [0,1]$, and under the CIC model, the edge delay is exponentially distributed
with parameter in the interval $\ID{(y_a,z)} = (0,1]$.
In total, the graph has $n = N+M+m = \Theta(m)$ nodes 
(in particular, polynomially many),  
and the reduction takes polynomial time.
Because $N$ is at most polynomially smaller than $M$, 
we have $N = \Omega(n^{\epsilon/2q})$,
and thus $\ln N = \Omega(\frac{\epsilon}{q} \cdot \ln(n))$.
For the CIC model, we set the time horizon to $T = NM$.

First, consider the case when there is a set cover $\mathcal{C}$ of
size $k$.
Consider choosing the corresponding $x_T, T \in \mathcal{C}$ as seed
nodes; call the resulting seed set $S$. 
$S$ will definitely activate all nodes in $Y$, for a total of $k+N$.
Now, consider any assignment of probabilities \ActProbD{y_a,z} or edge delays \Delay{y_a,z} to
the edges from $Y$ to $Z$, and an optimal seed set $S^*$ of size $k$.
Let $Z^* = Z \cap S^*$ be the set of seed nodes chosen from $Z$, of size
$k'$. Then, $S^*$ definitely activates all of $Z^*$, and at most
all $N$ nodes from $Y$ as well as $k-k'$ nodes from $X$, for a total
(so far) of $N+k$.
For any node $z \in Z \setminus Z^*$, the probability that it is
activated by $S$ is at least as large as under $S^*$, because
for any values of the individual activation probabilities or delays
between $Y$ and $Z$, the fact that $S$ activates all of $Y$ ensures
that any node in $Z$ activated under $S^*$ is also activated under $S$
(by time $T$, in the case of the CIC model).
%activated by $S$ is $1-\prod_{a} (1-\ActProbD{y_a,z})$.
%On the other hand, under the solution $S^*$, at most all of $Y$ is
%activated, giving a probability of at most
%$1-\prod_{a} (1-\ActProbD{y_a,z})$ for activation of $z$.
Because, the expected number of nodes activated from $Z \setminus Z^*$ is
at least as large under $S$ as under $S^*$, and the ratio is $1$.
Since this holds for all settings of the activation
  probabilities or edge delay parameters,
we get that $\WCObj{S} \geq 1$.

Now assume that there is no set cover of size 
$\frac{1}{2} \ln N \cdot k$, and consider any seed set $S$.
If $S$ contained any node $y_a$, we could replace it with any node
$x_T$ such that $a \in T$ and activate at least as many nodes as
before, so assume without loss of generality that 
$S \cap Y = \emptyset$. 
Because $|S \cap X| \leq |S| \leq \frac{1}{2} \ln N \cdot k$,
the gap guarantee implies that there is at least one node
$y_a \in Y$ that is never activated by $S$.
%(because $a \notin T$ for any set $T$ corresponding to nodes in $A$).
Now consider the probability assignment $\ActProbD{y_a,z} = 1$ for all
$z \in Z$, and $\ActProbD{y,z} = 0$ for all $y \neq y_a, z \in Z$.
(Under the CIC model, set $\Delay{y_a,z} = 1$ for all $z \in Z$, 
and $\Delay{y,z} = 1/(NM)^2$ for all $y \neq y_a, z \in Z$.)
Then, the seed set $S$ cannot activate any nodes in $Z$ 
(except those it may have selected), and will activate a total of at most
$N + k = O(N) = O(n^{\epsilon/2})$ nodes.
(Under the CIC model, this statement holds with high probability.)
On the other hand, the seed set $\Set{y_a}$ (just a single node) would
have activated all of $Z$ (with high probability, under the
  CIC model), for a total of $m+1 = \Omega(n)$ nodes. 
Hence, the ratio is at most $O(n^{\epsilon/2}/n) = O(1/n^{1-\epsilon/2})$,
implying that $\WCObj{S} \leq O(1/n^{1-\epsilon/2})$.

If there were an $(\epsilon \cdot c \cdot \ln(n), O(n^{1-\epsilon}))$ bicriteria
approximation algorithm for a sufficiently small constant $c$, 
it could distinguish which of the two cases 
($\WCObj{S} = 1, \WCObj{S} \leq O(1/n^{1-\epsilon/2})$) applied,
thus solving the gap version of \SETCOVER.

\end{document}